\newlength{\mytabwidth}
\newtheorem{theorem}{Theorem}
\newtheorem{proof}{Proof}
\newtheorem{remark}{Remark}
\newcommand{\PreserveBackslash}[1]{\let\temp=\\#1\let\\=\temp}
\newcolumntype{C}[1]{>{\PreserveBackslash\centering}p{#1}}
\newcolumntype{R}[1]{>{\PreserveBackslash\raggedleft}p{#1}}
\newcolumntype{L}[1]{>{\PreserveBackslash\raggedright}p{#1}}
\renewcommand{\maketag@@@}[1]{\hbox{\m@th\normalsize\normalfont#1}}%
\def\BibTeX{{\rm B\kern-.05em{\sc i\kern-.025em b}\kern-.08em
    T\kern-.1667em\lower.7ex\hbox{E}\kern-.125emX}}
\begin{document}

\title{Road Network-Aware Personalized Trajectory Protection with Differential Privacy under Spatiotemporal Correlations\\}

\author{Minghui Min,~\IEEEmembership{Senior Member,~IEEE,} Jiahui Liu, Mingge Cao, Shiyin Li,
    \\ Hongliang Zhang,~\IEEEmembership{Member,~IEEE,}  Miao Pan,~\IEEEmembership{Senior Member,~IEEE,}
	and Zhu Han,~\IEEEmembership{Fellow,~IEEE}
	
	\IEEEcompsocitemizethanks{\IEEEcompsocthanksitem Corresponding author is Minghui Min. This work was partly presented at IEEE WCNC 2024 \cite{cao2024protecting}.
\IEEEcompsocthanksitem Minghui Min is with the School of Information and Control Engineering, China University of Mining and Technology, Xuzhou 221116, China, and also with the Key Laboratory of Aerospace Information Security and Trusted Computing, Ministry of Education and School of CyberScience and Engineering, Wuhan University, Wuhan 430072, China. E-mail: minmh@cumt.edu.cn.
		\IEEEcompsocthanksitem Jiahui Liu, Mingge Cao, and Shiyin Li are with the School of Information and Control Engineering, China University of Mining and Technology, Xuzhou 221116, China. E-mails: liujiahui@cumt.edu.cn, caomg@cumt.edu.cn, and lishiyin@cumt.edu.cn.
		\IEEEcompsocthanksitem Hongliang Zhang is with the School of Electronics, Peking University, Beijing 100871, China. E-mails: hongliang.zhang92@gmail.com.
		\IEEEcompsocthanksitem Miao Pan is with the Department of Electrical and Computer Engineering at the University of Houston, Houston, TX 77004 USA. E-mail: mpan2@uh.edu.
\IEEEcompsocthanksitem Zhu Han is with the Department of Electrical and Computer Engineering at the University of Houston, Houston, TX 77004 USA, and also with the Department of Computer Science and Engineering, Kyung Hee University, Seoul, South Korea, 446-70. E-mail: hanzhu22@gmail.com.}
}

\maketitle

\begin{abstract}
Location-Based Services (LBSs) offer significant convenience to mobile users but pose significant privacy risks, as attackers can infer sensitive personal information through spatiotemporal correlations in user trajectories. Since users' sensitivity to location data varies based on factors such as stay duration, access frequency, and semantic sensitivity, implementing personalized privacy protection is imperative. This paper proposes a Personalized Trajectory Privacy Protection Mechanism (PTPPM) to address these challenges. Our approach begins by modeling an attacker’s knowledge of a user’s trajectory spatiotemporal correlations, which enables the attacker to identify possible location sets and disregard low-probability location sets. To combat this, we integrate geo-indistinguishability with distortion privacy, allowing users to customize their privacy preferences through a configurable privacy budget and expected inference error bound. This approach provides the theoretical framework for constructing a Protection Location Set (PLS) that obscures users' actual locations. Additionally, we introduce a Personalized Privacy Budget Allocation Algorithm (PPBA), which assesses the sensitivity of locations based on trajectory data and allocates privacy budgets accordingly. This algorithm considers factors such as location semantics and road network constraints. Furthermore, we propose a Permute-and-Flip mechanism that generates perturbed locations while minimizing perturbation distance, thus balancing privacy protection and Quality of Service (QoS). Simulation results demonstrate that our mechanism outperforms existing benchmarks, offering superior privacy protection while maintaining user QoS requirements.

\end{abstract}

\begin{IEEEkeywords}
Location-Based Service (LBS), trajectory privacy, spatiotemporal correlation, geo-indistinguishability.
\end{IEEEkeywords}

\section{Introduction}

Spatiotemporal trajectory data plays a crucial role in enhancing Location-Based Services (LBSs) for applications in Vehicular Ad Hoc Networks (VANETs) and other mobile networks, like smart cities, traffic management, and personal location services \cite{10177933, wang2019protecting, lv2021private}.
However, the use of such data raises significant privacy concerns, as it can potentially expose sensitive information about individuals' daily routines and behaviors, leading to risks of surveillance and misuse of personal data \cite{tang2021dlp, lu2011pseudonym, li2018location}.
In VANETs, user mobility is primarily confined to road networks \cite{tan2020protecting}, but many existing privacy protection schemes fail to fully account for the structural attributes of these networks. This gap allows the attacker with knowledge of the road network to launch more precise attacks. Thus, it is crucial to consider spatiotemporal correlations among different locations, as well as the geographical context of the road network, to ensure robust trajectory privacy protection.

Additionally, different users may exhibit unique requirements for trajectory privacy protection \cite{min20213d, jin2021ulpt}. The sensitivity of each location along a trajectory is influenced by factors such as stay duration, access frequency, and semantic sensitivity. Consequently, users’ privacy needs vary. Furthermore, these demands evolve based on users’ changing roles. For example, a bus driver who routinely operates along a specific route might consider these locations less sensitive due to their job compared to passengers who infrequently traverse the same routes. Therefore, ensuring the privacy of user trajectories while addressing personalized demands is essential.

Most existing research focuses on privacy protection for individual locations \cite{li2017achieving, sun2023synthesizing}. For instance, geo-indistinguishability, an extension of differential privacy, aims to protect a user’s location within a certain radius, guaranteeing ``generalized differential privacy" \cite{andres2013geo}. However, this approach does not account for adversaries’ arbitrary prior knowledge, leading to potential privacy leakage \cite{shokri2012protecting, yao2022privacy}, and the degree of privacy protection is often not clearly defined. To address these limitations, authors in \cite{yu2017dynamic} combined geo-indistinguishability with expected inference error to propose a personalized location privacy protection mechanism. However, this approach still does not consider the spatiotemporal correlations between different locations along a trajectory.

A solution based on “$\delta$-location set” differential privacy is proposed in \cite{xiao2015protecting}, combining the location privacy protection mechanism of \cite{yu2017dynamic} with temporal correlations between locations on a trajectory. However, this approach assigns the same privacy budget to all locations and does not accommodate users' personalized demands. Our previous work \cite{cao2024protecting} provides personalized privacy protection for different locations in a trajectory, taking into account the temporal correlation of locations according to specific privacy requirements of users. However, it lacks a comprehensive consideration of the factors that determine the sensitivity of locations and the spatial correlations of locations have not been included. 
A privacy protection mechanism that dynamically adjusts privacy levels based on the sensitivity of trajectory points is proposed in \cite{10666272} to enhance flexibility. However, spatiotemporal correlations between locations are not explicitly considered, and personalization is primarily based on location semantic sensitivity.
Some trajectory privacy protection schemes apply $k$-anonymity \cite{xing2021location} to generalize and aggregate individual trajectory data, ensuring that the trajectory is protected by mixing it with at least $k-$1 other trajectories. These methods, however, rely on a trusted third party and fail to offer strict privacy guarantees \cite{jin2022survey, wu2023tcpp}.  In summary, existing trajectory privacy protection schemes often lack comprehensive approaches in critical areas, such as considering the spatiotemporal correlations between locations on a trajectory, meeting users’ personalized needs, and ensuring the protection of users' actual locations without relying on a trusted third party. As a result, there is a need for a novel trajectory privacy mechanism that addresses these issues simultaneously.

In this paper, we propose a Personalized Trajectory Privacy Protection Mechanism (PTPPM) that takes into account the spatiotemporal correlations between locations along a trajectory. The mechanism constructs a location transition probability matrix, combined with prior probability correlations at each moment along the trajectory and road network constraints, to derive the potential location set for the user at each timestamp. By integrating geo-indistinguishability \cite{andres2013geo} and distortion privacy \cite{shokri2011quantifying}, we enhance the system’s robustness to location inference attacks while providing a personalized approach. Geo-indistinguishability limits the attacker’s posterior knowledge but cannot quantify the similarity between the inferred and actual locations, while distortion privacy ensures that the attacker’s expected inference error exceeds a certain threshold. The combination of these techniques strengthens the defense against location inference attacks. To further enhance privacy, we incorporate a minimum distance search algorithm based on the Hilbert curve \cite{yu2017dynamic} to identify a Protection Location Set (PLS) that encompasses all potential locations along the trajectory.

Moreover, the mechanism adjusts privacy settings through two privacy parameters to achieve personalized protection. We introduce a Personalized Privacy Budget Allocation Algorithm (PPBA), which assigns varying privacy budgets to different locations based on their sensitivity. This sensitivity is determined by factors such as stay duration, access frequency, and semantic sensitivity. Additionally, we establish a directed graph that captures spatiotemporal correlations and road network constraints, allowing for the allocation of privacy budgets based on the proximity of sensitive locations and neighboring nodes.

Lastly, we developed a Permute-and-Flip (PF) mechanism \cite{mckenna2020permute}, originally designed for data privacy protection, as a location perturbation mechanism. This novel adaptation achieves smaller perturbation distances, improving the balance between privacy and Quality of Service (QoS). Simulation results demonstrate that PTPPM offers personalized trajectory privacy protection and delivers superior privacy protection compared to PIVE \cite{yu2017dynamic}, PIM \cite{xiao2015protecting}, and CTDP \cite{10666272} while maintaining QoS requirement. The main contributions of our work are as follows:
\begin{enumerate}
\item We propose a Personalized Trajectory Privacy Protection Mechanism (PTPPM) designed to defend against attackers who exploit spatiotemporal correlations between various locations on a trajectory while also considering the road network constraints.

\item We develop a Personalized Privacy Budget Allocation Algorithm (PPBA) by integrating a more practical and comprehensive sensitivity assessment approach. By integrating the concepts of geo-indistinguishability and distortion privacy, we enhance the system's robustness against location inference attacks while achieving a personalized approach.

\item We present a novel Permute-and-Flip location perturbation mechanism, which achieves smaller perturbation distances, improving the balance between trajectory privacy and QoS.

\item We conduct comprehensive simulations to analyze the impact of different privacy budgets and expected inference errors on personalized user requirements. We also demonstrate that PTPPM outperforms benchmarks in terms of balancing trajectory privacy and QoS loss.
\end{enumerate}

The remainder of this paper is organized as follows. Section \ref{Related Work} introduces the related work. Section \ref{System Model} presents the system model. We present the trajectory privacy protection statement in Section \ref{Trajectory Privacy Protection Statement}. A PTPPM framework is proposed in Section \ref{Personalized Trajectory Privacy Protection Mechanism}. The evaluation results are provided in Section \ref{Simulation Results}. Finally, we conclude this work in Section \ref{Conclusion}.

\section{Related Work}\label{Related Work}
Existing trajectory privacy protection methods include generalization\cite{gramaglia2021glove}, mixed regions\cite{hou2021tracking}, and $k$-anonymity\cite{xing2021location}. These techniques ensure that individual trajectory data is generalized and aggregated to protect the true trajectory when combined with at least $k-$1 other trajectories into an anonymous region. However, these methods rely on trusted third parties, posing a risk of privacy leakage if the server is compromised or attacked. They also fail to provide strict privacy guarantees. In the face of background knowledge attacks and combination attacks, an attacker can still obtain personal privacy information. To overcome background knowledge attacks, differential privacy (DP) is first proposed in \cite{dwork2006differential}. DP uses strict mathematical theory to protect user privacy, ensuring that the attacker cannot infer personal information through background knowledge analysis.

Current fake trajectory privacy protection schemes do not fully consider factors such as actual terrain and road network conditions when generating fake trajectories, neglecting the spatiotemporal correlation of locations along a trajectory \cite{qiu2023novel}. The virtual rotation algorithm is proposed in \cite{7918561}, where actual trajectories are hidden among fake trajectories. However, only the spatial attributes of the trajectory data are considered, without considering the spatiotemporal correlations of different locations in the trajectory. The temporal correlation of different trajectory locations is addressed in \cite{xiao2015protecting}, which employs differential privacy concepts to obscure actual locations within a set of locations that incorporate prior knowledge, thereby decreasing the attacker's inference accuracy. A method for measuring trajectory privacy based on temporal correlation is further introduced in \cite{cao2018quantifying}, although it overlooks the spatial correlation between different locations. Building on this, spatiotemporal event privacy is proposed in \cite{cao2019protecting}, integrating both spatial and temporal correlations between trajectory points, and an algorithm is developed to quantify this privacy. However, these approaches do not adequately account for the geographical constraints between sequential locations.

Different users have varying privacy protection needs, and to accommodate personalized privacy preferences, different privacy budgets should be allocated for different sensitive locations \cite{xu2020personalized}. Various techniques, such as anonymization, fake locations, and mixed regions, have been proposed. A multi-level personalized $k$-anonymity privacy protection model is introduced in \cite{qian2024multi}, which personalizes the anonymization process for each granularity space using dynamic $k$-value sequences to meet diverse privacy needs. However, this approach relies on a trusted third party, which cannot provide stringent privacy guarantees. 
A location perturbation mechanism based on regional sensitivity is proposed in \cite{10666272}, where the privacy budget is dynamically allocated according to the sensitivity of trajectory points. However, the method does not consider the spatiotemporal correlation between trajectory.
Additionally, methods based on anonymization and suppression are vulnerable to the attacker with background knowledge.

\setlength{\textfloatsep}{10pt plus 1.0pt minus 2.0pt}
\begin{figure}[!t]
\begin{center}
\includegraphics[width=0.4\textwidth]{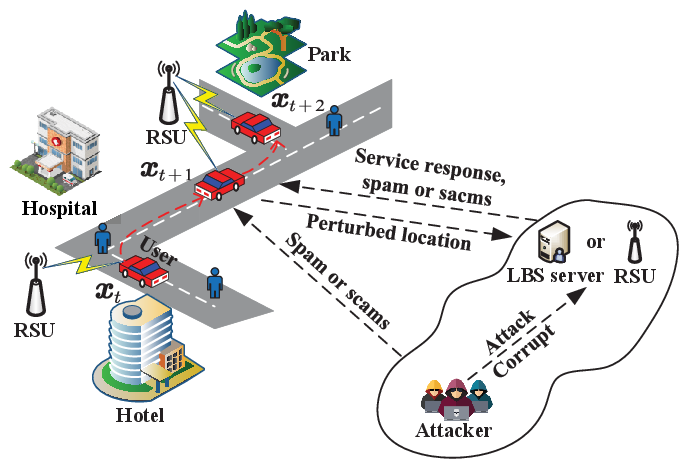}
\end{center}
\captionsetup{skip=-2pt}
\caption{Illustration of the trajectory privacy protection.}
\label{fig1}
\end{figure}

Differential privacy and expected inference error have a complementary relationship, and their combination can create more robust privacy protection mechanisms suitable for practical applications \cite{wang2020sparse, 10325612}.
Based on this, a personalized location privacy protection mechanism is proposed in \cite{yu2017dynamic}, which combines geo-indistinguishability \cite{andres2013geo} with expected inference error \cite{wang2020sparse} to meet the personalized privacy requirements. The Hilbert curve is used to search PLS because it has the advantage of reducing dimension and maintaining locality.  It can achieve better performance compared with other space-filling methods \cite{10325612}.
To counter powerful attacker who obtains the spatiotemporal correlations between different locations in a trajectory, the “$\delta$-location set” is proposed \cite{xiao2015protecting}. This method uses Markov chains to represent the temporal relationships between locations, protecting the actual locations at different timestamps. However, it applies uniform perturbation to all locations, disregarding personalized user needs. Building on this, our previous work in \cite{cao2024protecting} considers the temporal correlation of locations, providing personalized privacy protection for different locations in a trajectory based on user-specific requirements. Nonetheless, this mechanism only analyzes the impact of privacy parameters on the degree of personalized privacy protection within a simplified model without fully considering the factors determining location sensitivity. 

\setlength{\textfloatsep}{10pt plus 1.0pt minus 2.0pt}
\begin{figure}[!t]
\begin{center}
\includegraphics[width=0.34\textwidth]{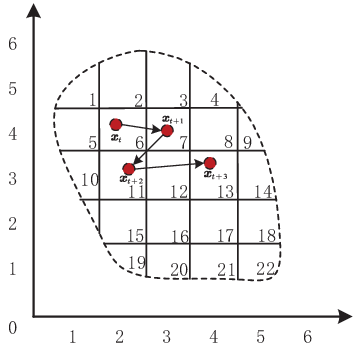}
\end{center}
\captionsetup{skip=-2pt}
\caption{User map coordinates and status coordinates.}
\label{fig2}
\end{figure}

\section{System Model}\label{System Model}
To obtain real-time LBSs, we consider that VANET users or mobile users share their location information with a roadside unit (RSU) or an LBS server at different times and locations \cite{min20213d}, \cite{9201413}. The user interacts with the RSU to obtain road information that can be used to search for nearby gas stations and navigate. The user's driving trajectory, illustrated in Fig. \ref{fig1}, showcases locations $\boldsymbol{x}_t$, $\boldsymbol{x}_{t+1}$, and $\boldsymbol{x}_{t+2}$ at different time points. Though capable of providing service responses, this server may compromise user privacy by probing into visited locations. Besides, the external attacker might corrupt or attack the untrusted LBS server to steal the user's location information. Through data analysis, the untrusted LBS server or external attackers could deduce the user's actual locations by leveraging spatiotemporal patterns and road network constraints. Ultimately, the attacker can use the inferred location data to send targeted spam or scams to the user for financial gain or other illegal purposes.
Important symbols are summarized in Table \ref{tab:1}.

\subsection{User Model}
We consider a VANET user driving within a city. We divide map $\mathcal{A}$ of the city into regions $\boldsymbol{x}_i$, each representing a semantic location, and each region has a 2D coordinate corresponding to it. $\boldsymbol{x}_i$ is a unit vector with the $i$-th element being 1 and other elements being 0. $\boldsymbol{x}_t$ represents the user's true location at time $t$, and $\boldsymbol{l}_t$ represents the two-dimensional coordinates of the user's location state at time $t$.
$Tr=\left[ \boldsymbol{x}_t,\boldsymbol{x}_{t+1},...,\boldsymbol{x}_{t+T} \right]$ represents the user's trajectory in continuous time length $T$. For example, as shown in Fig. \ref{fig2}, $\mathcal{A}=\left\{ \boldsymbol{x}_1,\ \boldsymbol{x}_2,\ \cdot \cdot \cdot ,\ \boldsymbol{x}_{22} \right\}$, $\boldsymbol{x}_t=\boldsymbol{x}_6=\left[ 0,0,0,0,0,1,0 \cdot \cdot \cdot ,0 \right]$, $\boldsymbol{l}_t=\left[ 2,4 \right]$, $Tr=\left[ \boldsymbol{x}_6,\boldsymbol{x}_7,\boldsymbol{x}_{11},\boldsymbol{x}_{13} \right]$.

The user uses the location perturbation mechanism to remap the actual location $\boldsymbol{x}_t$ from the actual location set $O_1$ to the fake location $\boldsymbol{x}_{t}^{'}$ from the perturbed location set $O_2$. The location perturbation probability distribution $f$ is given by
\begin{equation}\label{2}
f\left( \boldsymbol{x}_{t}^{'}|\boldsymbol{x}_t \right) =\text{Pr}\left( O_2=\boldsymbol{x}_{t}^{'}|O_1=\boldsymbol{x}_t \right) ,\ \ \ \ \boldsymbol{x}_t,\boldsymbol{x}_{t}^{'}\in \mathcal{A}.
\end{equation}
We use $\mathbf{p}_t$ to represent the user's location state at time $t$, where $\mathbf{p}_t\left[ i \right] =\text{Pr}\left( \boldsymbol{x}_t=\boldsymbol{x}_i \right) =\text{Pr}\left( \boldsymbol{l}_t \right)$ represents the probability that the user's actual location is in $\boldsymbol{x}_i$ at time $t$. Assuming that the user is distributed with the same probability $\mathcal{A}=\left\{ \boldsymbol{x}_2,\ \boldsymbol{x}_3,\ \boldsymbol{x}_5,\ \boldsymbol{x}_7 \right\}$, the location probability distribution of the user is $\mathbf{p}_t=\left[ 0,0.25,0.25,0,0.25,0,0.25,0,\cdot \cdot \cdot ,0 \right]$. We use $\mathbf{p}_{t}^{-}$ and $\mathbf{p}_{t}^{+}$ to represent the prior and posterior probabilities of the user before and after observing the released perturbed location $\boldsymbol{x}_{t}^{'}$.

\subsection{Attack Model}
We consider the attacker to be an untrusted LBS server or an external attacker who may attack or corrupt the LBS server. The attacker can access the user's current location information for commercial profit or illegal purposes. We assume that the attacker knows the location perturbation probability distribution $f\left( \boldsymbol{x}_{t}^{'}|\boldsymbol{x}_t \right)$, and can obtain the prior distribution $\mathbf{p}_{t}^{-}=\text{Pr}\left( \boldsymbol{x}_t \right)$ of the user’s current location through public tracking, check-in data set, or statistical information \cite{chatzikokolakis2015constructing}. Then, the attacker can calculate the posterior probability distribution $\mathbf{p}_{t}^{+}=\text{Pr}\left( \boldsymbol{x}_t|\boldsymbol{x}_{t}^{'} \right)$ after observing the user’s reported location $\boldsymbol{x}_{t}^{'}$, i.e.,
\begin{align}\label{Posterior probability}
\mathbf{p}_{t}^{+}=\text{Pr}\left( \boldsymbol{x}_t|\boldsymbol{x}_{t}^{'} \right) =\frac{\text{Pr}\left( \boldsymbol{x}_t \right) f\left( \boldsymbol{x}_{t}^{'}|\boldsymbol{x}_t \right)}{\sum_{\boldsymbol{x}_t\in \mathcal{A}}{\text{Pr}\left( \boldsymbol{x}_t \right) f\left( \boldsymbol{x}_{t}^{'}|\boldsymbol{x}_t \right)}}.
\end{align}

The {\textbf{optimal inference attack}} aims to infer the actual location at time $t$ by minimizing the expected inference error against the posterior distribution. Therefore, the inferred location $\boldsymbol{\hat{x}}_t$ is
\begin{align}\label{optimalinference}
\boldsymbol{\hat{x}}_t=\underset{\boldsymbol{\hat{x}}_t\in \mathcal{A}}{\text{arg}\min}\sum_{\boldsymbol{x}_t\in \mathcal{A}}{\text{Pr}\left( \boldsymbol{x}_t|\boldsymbol{x}_{t}^{'} \right)}d\left( \boldsymbol{\hat{x}}_t,\boldsymbol{x}_t \right).
\end{align}

\setlength{\textfloatsep}{10pt plus 1.0pt minus 2.0pt}
\begin{table}[!t]
	\centering
	\caption{List of Notations}
	\label{tab:1}
	\begin{tabular}{llllllllllll}
		\hline\hline\noalign{\smallskip}
		Symbol & Description  \\
		\noalign{\smallskip}\hline\noalign{\smallskip}
		$\boldsymbol{x}_i$ & Location $i$\\
		$\mathcal{A}$ & User map\\
		$Tr$ & User trajectory\\
		$Tr'$ & $\mathbf{Attacker}_{\mathcal{T}}$ inferred trajectory\\
		$\boldsymbol{x}_t/\boldsymbol{x}_{t}^{'}/\boldsymbol{\hat{x}}_t$ & Actual/perturbed/inferred location at time $t$\\
		$\mathbf{M}$ & Location transition probability matrix\\
		$\mathbf{p}_{t}^{-}/\mathbf{p}_{t}^{+}$ & Prior/Posterior probability at time $t$\\
		$L_i$ & Stay duration of $\boldsymbol{x}_i$\\
		$F_i$ & Access Frequency of $\boldsymbol{x}_i$\\
		$C_i$ & Semantic sensitivity of $\boldsymbol{x}_i$\\
		$S_i$ & Sensitivity of $\boldsymbol{x}_i$\\
		$G$ & Directed graph\\
		$B_i$ & Adjacent nodes set of $\boldsymbol{x}_i$\\
		$\epsilon _s$ & Total privacy budget of sensitive locations\\
		$E_m$ & Expected inference error bound\\
		$\delta$ & $0<\delta <1$\\
		$\varDelta \chi _t$ & Possible location set at time $t$\\
		$\Phi _t$ & Protection Location Set (PLS) at time $t$\\
		$D\left( \Phi _t \right)$ & Diameter of $\Phi _t$\\
		\noalign{\smallskip}\hline
	\end{tabular}
\end{table}

The function $d\left( \boldsymbol{\hat{x}}_t,\boldsymbol{x}_t \right)$ quantifies the discrepancy between the attacker's inferred location $\boldsymbol{\hat{x}}_t$ and the user's true location $\boldsymbol{x}_t$. The value of this distance directly reflects the severity of privacy leakage—a smaller distance indicates that the attacker's inference is closer to the actual location, representing a higher privacy risk. Specifically, when $d\left( \boldsymbol{\hat{x}}_t,\boldsymbol{x}_t \right)=0$, it signifies that the attacker has perfectly reconstructed the user's location, resulting in complete privacy leakage and a fully successful attack. The simulation results are shown in Figs. \ref{fig8a} and \ref{fig8b}. 

If $d\left( \boldsymbol{\hat{x}}_t,\boldsymbol{x}_t \right)$ represents the hamming distance, when $\boldsymbol{\hat{x}}_t=\boldsymbol{x}_t$, then $d\left( \boldsymbol{\hat{x}}_t,\boldsymbol{x}_t \right) =0$; otherwise, $d\left( \boldsymbol{\hat{x}}_t,\boldsymbol{x}_t \right) =1$. In this way, (\ref{optimalinference}) takes the location with the maximum posterior probability as the inferred location, called the {\textbf{Bayesian inference attack}, described as follows:
\begin{align}\label{Bayesianinference}
\boldsymbol{\hat{x}}_t=\underset{\boldsymbol{\hat{x}}_t\in \mathcal{A}}{\text{arg}\max}\,\,\text{Pr}\left( \boldsymbol{x}_t|\boldsymbol{x}_{t}^{'} \right).
\end{align}

The posterior probability $\Pr\left(\boldsymbol{x}_t|\boldsymbol{x}^{\prime}_t\right)$ represents the attacker's confidence in inferring the original location $\boldsymbol{x}_t$ after observing the perturbed location $\boldsymbol{x}^{\prime}_t$. Its magnitude directly correlates with the threat level of the attack: a higher probability indicates that the attacker can more accurately deduce the true location based on background knowledge, reflecting weaker defense effectiveness of the privacy-preserving mechanism. Conversely, a lower probability demonstrates that the privacy protection strategy successfully reduces inference accuracy. The simulation results are shown in Figs. \ref{fig8c} and \ref{fig8d}.

\textbf{Definition 1. ($\mathbf{Attacker}_{\mathcal{T}}$)} Based on the above, we define an attacker $\mathbf{Attacker}_{\mathcal{T}}$ who knows the spatiotemporal correlation information between locations, i.e., the user's location transition probability matrix $\mathbf{M}$. 
Within the domain of trajectory privacy research, inference attack constitutes a paramount security threat owing to their minimal data dependency and superior inferential power \cite{de2013unique}. The inherent regularity characterizing human mobility behaviors allows adversaries to accurately infer sensitive locations through the analysis of spatiotemporal correlations and intrinsic movement patterns.
Because $\mathbf{Attacker}_{\mathcal{T}}$ has the road network information between locations and the user's mobile profile, this kind of attacker can infer the user's actual location more accurately by excluding the untrusted perturbed locations. It can infer the user's location at the next moment according to the user's current location. By performing an inference attack on the location of each moment on the trajectory, $\mathbf{Attacker}_{\mathcal{T}}$ can infer the user's trajectory at consecutive moments and obtain the user's personal privacy information. The specific process is as follows:

$\mathbf{Attacker}_{\mathcal{T}}$ can infer the prior probability $\mathbf{p}_{t+1}^{-}$ of the user at time $t+1$, i.e.,
\begin{align}\label{t+1 prior probability}
\mathbf{p}_{t+1}^{-}=\mathbf{p}_{t}^{+}\mathbf{M}.
\end{align}
Using $\mathbf{p}_{t+1}^{-}$ $\mathbf{Attacker}_{\mathcal{T}}$ can identify with high confidence the locations where the user will likely not be at time $t+1$.
By deriving the posterior probability $\mathbf{p}_{t+1}^{+}$ at time $t+1$ according to (\ref{Posterior probability}), $\mathbf{Attacker}_{\mathcal{T}}$ can perform an optimal inference attack or Bayesian inference attack on the user's location at time $t+1$ according to (\ref{optimalinference}) or (\ref{Bayesianinference}) to obtain the inferred location $\boldsymbol{\hat{x}}_{t+1}$.
$\mathbf{Attacker}_{\mathcal{T}}$ conducts an attack at each moment along the user's trajectory to derive the inferred trajectory $Tr'=\left[ \boldsymbol{\hat{x}}_1,\boldsymbol{\hat{x}}_2,...,\boldsymbol{\hat{x}}_T \right]$.

\section{TRAJECTORY PRIVACY NOTIONS AND PROBLEM STATEMENT}\label{Trajectory Privacy Protection Statement}
In this section, we first list the primary concepts related to trajectory privacy and the criteria for determining PLS, and then we present this paper's problem statement.

\subsection{Directed Graph of Road Network}
Taking into account the location semantics, geo-topological relationship, and spatiotemporal accessibility between regions, the map $\mathcal{A}$ is transformed into a weighted directed graph $G=\left< V,E \right>$, where locations and roads are represented by the set of vertices $V$ and the set of edges $E$, respectively. Each edge and vertex are denoted by $e_{i,j}\in E$ and $\boldsymbol{x}_i\in V$ correspondingly, with $\boldsymbol{x}_i$ representing the vertex indexed as $i$ and $e_{i,j}$ linking $\boldsymbol{x}_i$ and $\boldsymbol{x}_j$. For example, in Fig. \ref{fig3}, $V=\left\{ \boldsymbol{x}_1,\boldsymbol{x}_2,\cdot \cdot \cdot ,\boldsymbol{x}_{13} \right\}$ and $E=\left\{ e_{1,2},e_{1,4},e_{2,1},\cdot \cdot \cdot ,e_{13,10} \right\}$. Notably, in a directed weighted graph, $e_{1,2}$ and $e_{2,1}$ are considered to be different edges. The weights associated with the edges indicate the Euclidean distance between the two regions, such as $d(\boldsymbol{x}_4, \boldsymbol{x}_9)$.

\setlength{\textfloatsep}{10pt plus 1.0pt minus 2.0pt}
\begin{figure}[!t]
\begin{center}
\includegraphics[width=0.4\textwidth]{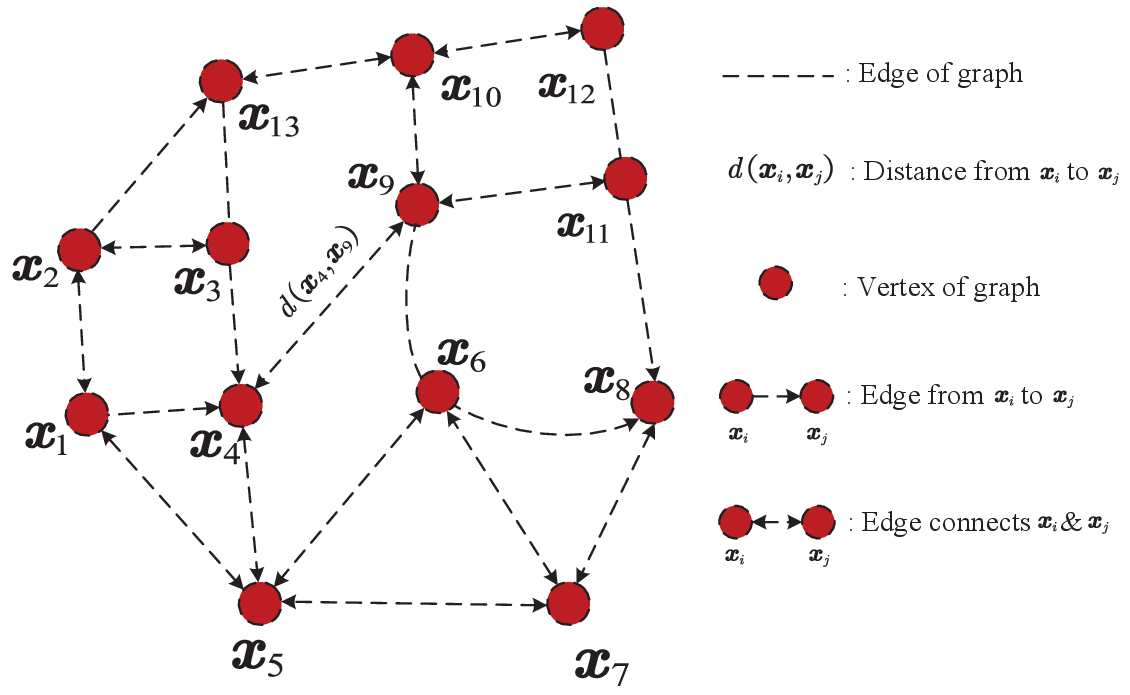}
\end{center}
\captionsetup{skip=-2pt}
\caption{Directed graph of the road network.}
\label{fig3}
\end{figure}
\subsection{Location Transition Probability Matrix}
The Matrix $\mathbf{N}$ is the location transfer matrix, representing the frequency of user transitions between different regions. Let $n_{ij}$ be an element in the $i$-th row and the $j$-th column of matrix $\mathbf{N}$, and $n_{ij}$ represents the number of times the user moves from region $\boldsymbol{x}_{i}$ to region $\boldsymbol{x}_{j}$.

Through the location transition matrix $\mathbf{N}$, the location transition probability matrix $\mathbf{M}$ of the user can be analyzed. Let $m_{ij}$ be an element in the $i$-th row and the $j$-th column of matrix $\mathbf{M}$, $m_{ij}=\dfrac{n_{ij}}{\sum_j{n_{ij}}}$ represents the probability of the user transitioning from $\boldsymbol{x}_{i}$ to $\boldsymbol{x}_{j}$. The matrix $\mathbf{M}$ describes the spatiotemporal correlation of the user across various locations within a trajectory.
\subsection{$\delta$-Location Set}
To protect locations frequently visited by the user, $\delta$-location set is introduced in \cite{xiao2015protecting}. This set, denoted as $\varDelta \chi _t$,  represents the locations where the user is most likely present at time $t$.

$\varDelta \chi _t$ denotes a set containing the minimum number of locations at time $t$ with a prior probability sum not less than $1-\delta $ ($0<\delta <1$), i.e.,
\begin{equation}\label{18}
\varDelta \chi _t=\min\left\{ \boldsymbol{x}_i|\sum_{\boldsymbol{x}_i}{\mathbf{p}_{t}^{-}\left[ i \right]}\ge 1-\delta \right\}.
\end{equation}

Given that the $\delta$-location set encompasses potential locations with a high likelihood of the user's presence at time $t$, the actual location $\boldsymbol{x}_t$ of the user may be eliminated with an extremely low probability. In this case, the nearest location $\boldsymbol{\tilde{x}}_t$ is substituted for the actual location $\boldsymbol{x}_t$, given by
\begin{equation}\label{draft location}
\boldsymbol{\tilde{x}}_t=\underset{\boldsymbol{\tilde{x}}_t\in \varDelta \chi _t}{\text{arg}\min}\ d\left( \boldsymbol{\tilde{x}}_t,\boldsymbol{x}_t \right).
\end{equation}
If $\boldsymbol{x}_t\in \varDelta \chi _t$, then $\boldsymbol{x}_t$ is protected within $\varDelta \chi _t$; otherwise, $\boldsymbol{\tilde{x}}_t$ is protected within $\varDelta \chi _t$.

\subsection{Condition for Determining PLS}

A two-phase dynamic differential location privacy framework known as PIVE has been proposed in \cite{yu2017dynamic}. This framework examines the complementary relationship between geo-indistinguishability and distortion privacy. It derives formulas to calculate the posterior probability's upper bound and the expected inference error's lower bound. By integrating these two privacy concepts, PIVE introduces a user-defined inference error bound, denoted as $E_m$, to determine PLS.

First, to guarantee the expected inference error in terms of PLS, the conditional expected inference error is given by
\begin{equation}\label{8}
ExpEr\left( \boldsymbol{x}_{t}^{'} \right) =\underset{\boldsymbol{\hat{x}}_t\in \mathcal{A}}{\min}\sum_{\boldsymbol{x}_t\in \mathcal{A}}{\text{Pr}\left( \boldsymbol{x}_t|\boldsymbol{x}_{t}^{'} \right)}d\left( \boldsymbol{\hat{x}}_t,\boldsymbol{x}_t \right).
\end{equation}
Given that the adversary narrows possible guesses to the PLS $\Phi _t$ that contains the user’s true location, we define
\begin{equation}\label{E}
E\left( \Phi _t \right) =\underset{\boldsymbol{\hat{x}}_t\in \mathcal{A}}{\min}\sum_{\boldsymbol{x}_t\in \Phi _t}{\frac{\text{Pr}\left( \boldsymbol{x}_t \right)}{\sum_{\boldsymbol{y}_t\in \Phi _t}{\text{Pr}\left( \boldsymbol{y}_t \right)}}}d\left( \boldsymbol{\hat{x}}_t,\boldsymbol{x}_t \right).
\end{equation}

Based on the lower bound of expected inference error,
\begin{equation}\label{15}
ExpEr\left( \boldsymbol{x}_{t}^{'} \right) \ge e^{-\epsilon}E\left( \Phi _t \right),
\end{equation}
the authors in \cite{yu2017dynamic} (Theorem 1) obtain a sufficient condition,
\begin{equation}\label{PLScondition}
E\left( \Phi _t \right) \ge e^{\epsilon}E_m,
\end{equation}
to satisfy the user-defined threshold, $\forall \boldsymbol{x}_{t}^{'}$, $ExpEr( \boldsymbol{x}_{t}^{'} ) \ge E_m$.

\subsection{Problem Statement}
The road network inherently constrains mobile users. Failing to take into account the user's specific road network exposes vulnerabilities that the attacker can exploit to collect extensive background information for malicious purposes. To enhance the protection of the user's current location,  it is essential to fully consider the structural characteristics of the road network. 
However, merely protecting the user's current location is insufficient due to the intricate spatiotemporal correlations between various locations along a trajectory.
Furthermore, each location along a trajectory varies in sensitivity due to differences in the user's stay duration, access frequency, and semantic sensitivity. Consequently, privacy requirements vary among individual users. Therefore, we develop a personalized trajectory privacy protection mechanism PTPPM, offering a better balance between trajectory privacy and QoS requirements.


\section{Personalized Trajectory Privacy Protection Mechanism} \label{Personalized Trajectory Privacy Protection Mechanism}

In this section, we propose a novel personalized trajectory privacy protection mechanism PTPPM, illustrated in Fig. \ref{fig4}.
Our approach integrates geo-indistinguishability and distortion privacy measures to defend against the attacker who knows the spatiotemporal correlations among various locations along the trajectory, thereby protecting the user's personalized trajectory privacy. Specifically, we first use algorithm $\mathcal{F}_1$ to obtain the set of possible locations for the user at each time instance, leveraging the associated prior probability at each moment along the trajectory.
We adopt algorithm $\mathcal{F}_2$ to allocate privacy budgets to individual locations based on their sensitivity and the underlying road network information.
Subsequently, algorithm $\mathcal{F}_3$ dynamically selects PLS for each possible trajectory location, incorporating geo-indistinguishability and distortion privacy. Notably, our mechanism facilitates personalized trajectory privacy protection through customizable privacy settings, such as expected inference error bound and privacy budget.
Finally, we put forth a novel Permute-and-Flip mechanism $\mathcal{K}$ to generate perturbed locations $\boldsymbol{x}_{t}^{'}$ within the PLS. These perturbed locations are strategically selected with a smaller perturbation distance, ensuring a better QoS experience while providing robust and effective privacy protection.

Allocating a global privacy budget can sometimes result in uniform protection across all locations, potentially leading to imbalances where highly sensitive areas receive inadequate protection while less critical ones are overly protected. To tackle this challenge, we present a novel approach that entails assigning privacy budgets to specific locations along the trajectory according to their individual sensitivity levels and road network constraints. Through this personalized privacy budget allocation approach, we aim to cater to the personalized privacy needs of each user.

\setlength{\textfloatsep}{10pt plus 1.0pt minus 2.0pt}
\begin{figure}[!t]
\begin{center}
\includegraphics[width=0.5\textwidth]{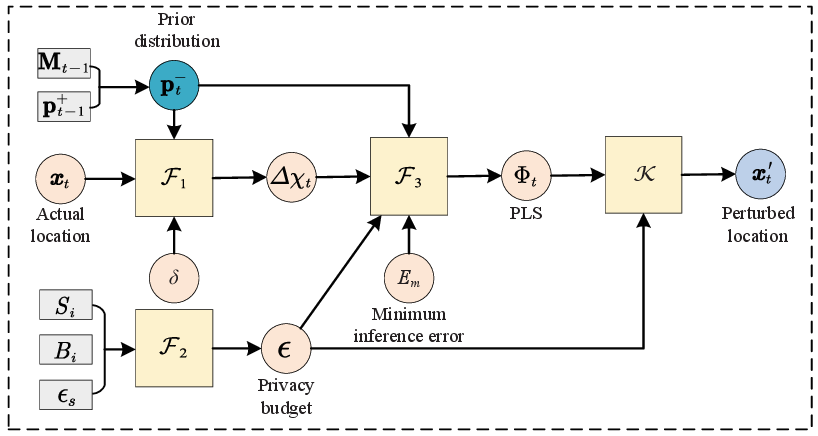}
\end{center}
\captionsetup{skip=-2pt}
\caption{The framework of PTPPM.}
\label{fig4}
\end{figure}

\subsection{Personalized Privacy Budget Allocation}
To better meet the personalized needs of users and defense against $\mathbf{Attacker}_{\mathcal{T}}$, we have developed a PPBA algorithm. This algorithm is designed carefully considering location sensitivity and road network constraints.

\subsubsection{Location Sensitivity}
Various locations hold varying degrees of sensitive information for the user. Factors like stay duration, access frequency, and semantic sensitivity contribute to the sensitivity levels associated with different locations.
\paragraph{Stay Duration}
By analyzing the user's mobile trajectory, it is discovered that each user's average daily activity time follows a notable power law distribution. Based on the user's historical trajectory data over a specific period, the average duration of the user's visit to each location is calculated, which can reflect the degree of the user's dependence on the location \cite{wen2020privacy}. The user's stay duration $L_i$ at location $\boldsymbol{x}_i$ is defined as
\begin{equation}
L_i=\frac{\int_{t_1}^{t_2}{f\left( t \right) dt}}{t_2-t_1},
\end{equation}
where $t_1$ is the start time of the access record and $t_2$ is the end time of the access record. $f(t) =1$ when the user is in the region and $f(t) =0$ when the user is not in the region.
\paragraph{Access Frequency}
Access frequency represents the ratio of user visits to a specific location over a defined period to the total number of visits to all locations along the trajectory.
Frequent visits to a location may indicate its significance to the user and suggest a higher likelihood of containing private user information \cite{wen2020privacy}. The user's access frequency $F_i$ at location $\boldsymbol{x}_i$ is
\begin{equation}
F_i=\frac{N\left( \boldsymbol{x}_i \right)}{\sum_{\boldsymbol{x}_i\in Tr}{N\left( \boldsymbol{x}_i \right)}},
\end{equation}
where $N\left( \boldsymbol{x}_i \right)$ is the number of user visits to location $\boldsymbol{x}_i$. $Tr=\left[ \boldsymbol{x}_1,\boldsymbol{x}_2,...,\boldsymbol{x}_T \right]$, $Tr$ represents the user's trajectory in continuous time length $T$.

\paragraph{Semantic Sensitivity}
The semantic sensitivity of a location, denoted as $C$, reflects the degree of sensitivity the user associates with a specific location. For instance, with $C\in \left\{ 1,2,3,4 \right\}$, a value of $C_i=1$ means that the location $\boldsymbol{x}_i$ is non-sensitive to the user (e.g., a park), while $C_i=4$ indicates high sensitivity of the location $\boldsymbol{x}_i$ to the user (e.g., a hospital) \cite{min2023indoor}.

Using the user's historical trajectory records over a specified period, metrics such as stay duration, access frequency, and semantic sensitivity are calculated for each location. These metrics are combined with weighted values to form a sensitivity level function. The user's sensitivity $S_i$ at location $\boldsymbol{x}_i$ is determined by

\begin{footnotesize}
\begin{eqnarray} \label{sensitive}
S_i&=&\alpha L_i+\beta F_i+\gamma C_i     \nonumber    \\
~&=&\alpha \frac{\int_{t_1}^{t_2}{f\left( t \right) dt}}{t_2-t_1}+\beta \frac{N\left( \boldsymbol{x}_i \right)}{\sum_{\boldsymbol{x}_i\in Tr}{N\left( \boldsymbol{x}_i \right)}}+\gamma C_i,
\end{eqnarray}
\end{footnotesize}where $\alpha$, $\beta$, and $\gamma$ represent the weighted values assigned to each sensitive factor in the calculation.

The weight parameters are dynamically configurable to adapt to different privacy requirements. For example, in sensitive scenarios like healthcare, the system increases the weight of semantic sensitivity $\gamma$, even with low stay duration and access frequency. For users concerned about behavioral privacy, the system increases the weight assigned to access frequency $\beta$ to enhance protection effectiveness. In more complex scenarios, a balanced strategy is applied to address multi-dimensional privacy risks. This flexible configuration highlights the model’s adaptability to diverse privacy needs.

\setlength{\textfloatsep}{10pt plus 1.0pt minus 2.0pt}
{\setstretch{0.95}  
\begin{algorithm}[!t]
    \SetKwData{Left}{left}
    \SetKwData{This}{this}
    \SetKwData{Up}{up}
    \SetKwFunction{Union}{Union}
    \SetKwFunction{FindCompress}{FindCompress}
    \SetKwInOut{Input}{Input}
    \SetKwInOut{Output}{Output}

    \Input{$Tr=\left[ \boldsymbol{x}_1,\boldsymbol{x}_2,...,\boldsymbol{x}_T \right]$, $G=\left< V,E \right>$, $\epsilon _s$}
    \Output{Privacy budget allocation of sensitive locations and their adjacent nodes.}
    \BlankLine

    \For{sensitive location $\boldsymbol{x}_i$, $i=1$ \KwTo $n$}
    {
        Determine $L_i$, $F_i$, $C_i$, and $B_i$\;
        Calculate $S_i=\alpha L_i+\beta F_i+\gamma C_i$\;
        Privacy budget allocation of $\epsilon _{\boldsymbol{x}_i}$ via (\ref{PrivacyBudgetSi})\;
        \For{$\boldsymbol{x}_j\in B_i$, $j=1$ \KwTo $m$}
            {\label{forins}
            Calculate $d\left( \boldsymbol{x}_j,\boldsymbol{x}_i \right)$\;
            Privacy budget allocation of $\epsilon _{\boldsymbol{x}_j}^{'}$ via (\ref{PrivacyBudgetNode})\;
            \If{$\boldsymbol{x}_j$ is also a sensitive location}
            {\label{lt} The privacy budget of $\boldsymbol{x}_j$ is $\min \left( \epsilon _{\boldsymbol{x}_j},\epsilon _{\boldsymbol{x}_j}^{'} \right)$;}

     }
             }
            \caption{Personalized Privacy Budget Allocation Algorithm (PPBA)} \label{Algorithm1}
    \end{algorithm}
}

\subsubsection{Privacy Budget Allocation of Sensitive Locations}
Users exhibit diverse privacy protection needs across various sensitive locations. To prevent reducing the accuracy and QoS of the perturbed trajectory, we implement distinct levels of perturbation tailored to each location's sensitivity. Through PPBA, we customize the allocation of privacy budgets according to the specific location sensitivities, ensuring a personalized and effective privacy protection strategy.

Based on historical trajectory data, the PPBA computes the user's stay duration, access frequency, and semantic sensitivity toward all sensitive locations within a specified period. Subsequently, the sensitivity of each sensitive location is determined by (\ref{sensitive}). In accordance with these sensitivity calculations, the algorithm assigns the appropriate privacy budget to the sensitive locations along the trajectory. The allocation formula is outlined as follows:
\begin{equation}\label{PrivacyBudgetSi}
\epsilon _{\boldsymbol{x}_i}=\frac{\frac{1}{S_i}}{\sum_{\boldsymbol{x}_i\in \mathcal{A}}{\frac{1}{S_i}}}\cdot \epsilon _s,
\end{equation}
where $\epsilon _{\boldsymbol{x}_i}$  denotes the privacy budget assigned to the sensitive location $\boldsymbol{x}_i$, and $S_i$ is the sensitivity associated with this specific location. The parameter $\epsilon _s$ represents the total privacy budget allocated for all sensitive locations.
\subsubsection{Privacy Budget Allocation of Adjacent Nodes}
In our approach, it is imperative not only to secure the sensitive location itself but also to appropriately perturb the surrounding locations. It is essential to prevent $\mathbf{Attacker}_{\mathcal{T}}$ from exploiting spatiotemporal correlations among locations to infer sensitive locations.

The PPBA identifies a set $B_i$ of adjacent nodes surrounding a sensitive location $\boldsymbol{x}_i$ in a directed graph $G$ representing the road network. PPBA allocates a privacy budget to nodes within $B_i$ based on the distance from the sensitive location to its neighboring nodes. The allocation formula is defined as:
\begin{equation}\label{PrivacyBudgetNode}
\epsilon _{\boldsymbol{x}_j}^{'}=\frac{\sum_{\boldsymbol{x}_j\in B_i}{\frac{1}{d\left( \boldsymbol{x}_i,\boldsymbol{x}_j \right)}}}{\frac{1}{d\left( \boldsymbol{x}_i,\boldsymbol{x}_j \right)}}\cdot \epsilon _{\boldsymbol{x}_i},
\end{equation}
where $\epsilon _{\boldsymbol{x}_j}^{'}$ represents the privacy budget allocated to node $\boldsymbol{x}_j$, and $d\left( \boldsymbol{x}_i,\boldsymbol{x}_j \right)$ represents the Euclidean distance between the sensitive location $\boldsymbol{x}_i$ and node $\boldsymbol{x}_j$. If node $\boldsymbol{x}_j$ is also a sensitive location, $\min  ( \epsilon _{\boldsymbol{x}_j},\epsilon _{\boldsymbol{x}_j}^{'})$ is selected as the privacy budget of that node.

As illustrated in Algorithm  \ref{Algorithm1}, we summarize the process of the PPBA algorithm.
The calculation of the above parameters depends on the user’s mobile profile, i.e., the attackers' prior knowledge, which can be updated over time as the dataset is updated. The proposed method can recalculate the parameters accordingly to determine the allocation of personalized privacy budgets, ensuring both adaptability and effectiveness in defending against inference attacks.

The PPBA algorithm offers a space complexity of $O(n)$ and enables efficient processing with a time complexity of $O(nm)$, where $n$ is the number of sensitive locations and $m$ denotes the number of adjacent nodes per location. Its complexity is mainly determined by the underlying directed graph that models the road network. To ensure computational efficiency, we employ a standard adjacency list to store and manage the graph structure. These low computational requirements guarantee scalability even in large-scale urban networks.

\subsection{Determine $\varDelta \chi_t$ at Continuous  Times}
The transition probability matrix $\mathbf{M}$ is constructed according to the user's historical trajectory data and behavior habits \cite{xiao2015protecting}.
We eliminate all impossible locations ($\mathbf{p}_{t}^{-}$ is minimal or $\mathbf{p}_{t}^{-}=0$) based on certain criteria to obtain the possible location set at time $t$, i.e., $\varDelta \chi _t$.
If the actual location at time $t$ is removed, we substitute it with $\boldsymbol{\tilde{x}}_t$.

We calculate the posterior probability $\mathbf{p}_{t}^{+}$ according to (\ref{Posterior probability}) and then combine the location transition probability matrix $\mathbf{M}$ according to (\ref{t+1 prior probability}) to obtain the prior probability $\mathbf{p}_{t+1}^{-}$ at time $t+1$. In terms of $\mathbf{p}_{t+1}^{-}$, we get $\varDelta \chi _{t+1}$ at time $t+1$. The size of $\varDelta \chi _{t+1}$ is determined by the value of $\delta$. Then, we obtain $\varDelta \chi _t$ at consecutive times by following the same process.
\subsection{Determine Protection Location Set}
Once we have obtained $\varDelta \chi _t$ for each timestamp along the trajectory, our next focus is determining the PLS $\Phi_t$ within $\varDelta \chi _t$ for any specific location.

In order to improve the user's QoS, the smaller the diameter $D\left(\Phi_t \right)$ of the circular area, the better. Since $D\left( \Phi _t \right)$ is the diameter of the $\Phi _t$, the distance between any two locations is less than or equal to $D\left( \Phi _t \right)$. For $\forall \boldsymbol{x}_t,\boldsymbol{\hat{x}}_t$ in  $\Phi _t$, we have $D\left( \Phi _t \right) \ge d\left( \boldsymbol{x}_t,\boldsymbol{\hat{x}}_t \right)$. By (\ref{PLScondition}), we have

\begin{small}
\begin{equation}\label{24}
e^{\epsilon}E_m\le E\left( \Phi _t \right) \le \underset{\boldsymbol{\hat{x}}_t\in \Phi _t}{\min}\sum_{\boldsymbol{x}_t\in \Phi _t}{\frac{\text{Pr}\left( \boldsymbol{x}_t \right)}{\sum_{\boldsymbol{y}_t\in \Phi _t}{\text{Pr}\left( \boldsymbol{y}_{\boldsymbol{t}} \right)}}}D\left( \Phi _t \right) =D\left( \Phi _t \right).
\end{equation}
\end{small}

We assign the privacy budget $\epsilon$ individually by the PPBA algorithm. $E_m$ is dynamically adjusted to determine the conditions that PLS should satisfy.
To effectively find the PLS with the smallest diameter at time $t$, we employ the minimum distance search method based on the Hilbert curve outlined in \cite{yu2017dynamic}. For each possible location $\boldsymbol{x}_t$ in $\varDelta \chi _t$ along the trajectory, we search the neighborhood of $\boldsymbol{x}_t$ following the search direction of the Hilbert curve.
By adhering to the conditions specified in (\ref{PLScondition}), we identify the PLS for $\boldsymbol{x}_t$  and select the one with the smallest diameter as the PLS $\Phi _t$.

On this basis, to mitigate the risk of an oversized protection area due to the single-direction search of the Hilbert curve, we introduce spatial rotation to enhance the likelihood of discovering a PLS for each location $\boldsymbol{x}_t$ with a reduced diameter.
Specifically, similar to \cite{yu2017dynamic}, we rotate the curve clockwise by 90, 180, and 270 degrees around the center point to generate three additional Hilbert curves. After rotation, we search the PLS under each Hilbert curve with the user's location and select the group with the smallest diameter from the four outcomes as the PLS.

\setlength{\textfloatsep}{10pt plus 1.0pt minus 2.0pt}
{\setstretch{0.95}  
\begin{algorithm}[!t]
    \SetKwData{Left}{left}
    \SetKwData{This}{this}
    \SetKwData{Up}{up}
    \SetKwFunction{Union}{Union}
    \SetKwFunction{FindCompress}{FindCompress}
    \SetKwInOut{Input}{Input}
    \SetKwInOut{Output}{Output}

    \Input{$\mathbf{p}_{t-1}^{+}$, $\mathbf{M}$, $\delta$, $Tr=\left[ x_1,x_2,...,x_T \right]$, $G=\left< V,E \right>$, $\epsilon _s$}
    \Output{$\mathbf{p}_{t}^{+}$, $\boldsymbol{x}_{t}^{'}$}
    \BlankLine
    $\mathbf{p}_{t}^{-}\gets \mathbf{p}_{t-1}^{+}\mathbf{M}$\;
    Determine $\varDelta \chi _t$ from $\mathbf{p}_{t}^{-}$, $\delta$\;
    \If{$\boldsymbol{x}_t\notin \varDelta \chi _t$}
            {\label{lt} $\boldsymbol{x}_t=\boldsymbol{\tilde{x}}_t$ via (\ref{draft location});}
    $\epsilon$ $\gets$ $\mathbf{Algorithm\ 1}$($G$, $\epsilon _s$, $Tr$)\;
    Dynamic adjustment of $E_m$\;
    \For{$\boldsymbol{x}_i\in \varDelta \chi _t$, $i=1$ \KwTo size of $\varDelta \chi _t$}
    {
        Find PLS $\Phi _t$ that satisfies (\ref{PLScondition}) via Hilbert curve \cite{yu2017dynamic}\;
        Select the smallest diameter $\Phi _t$ as PLS of $\boldsymbol{x}_t$\;
    }
    Release perturbed locations $\boldsymbol{x}_{t}^{'}$ by PF mechanism\;
    Calculate $\mathbf{p}_{t}^{+}$ via (\ref{Posterior probability})\;
    Go to the next timestamp;
            \caption{Personalized Trajectory Privacy Protection Mechanism (PTPPM)}\label{Algorithm2}
    \end{algorithm}
}

\subsection{Differentially Private Mechanism in Protection Location Set}
We put forth a new perturbation mechanism, Permute-and-Flip, to release the perturbed location with a smaller perturbation distance, which can better balance location privacy and QoS. Initially developed to protect privacy during data publication \cite{mckenna2020permute}, the PF mechanism is now used for the first time to protect the location within the PLS $\Phi _t$, exploiting the mapping correlation between the utility function and the Euclidean distance.
The Permute-and-Flip mechanism selects the query option with the highest score during query processing.
To balance privacy protection while minimizing QoS loss, we aim to output the location within the PLS closest to the actual location. However, when the output is the actual location, the user's privacy cannot be guaranteed, and it does not make sense to reduce the QoS loss.
This mechanism dynamically optimizes the selection of the PLS by expanding the search range, i.e., increasing the PLS diameter $D$, in response to location sparsity, influenced by (\ref{PLScondition}) and (\ref{24}). It demonstrates that this mechanism has strong adaptability in both typical urban and suburban scenarios.
To address this, we define the query function as the difference between the negative of the distance and the maximum non-zero negative distance value. The sensitivity of the utility function is defined as follows:
\begin{equation}\label{utility}
\begin{split}
\Delta u=&\mathop {\max  \max} \limits_{\boldsymbol{x}_{t}^{^{\prime}}\in \mathcal{A} ,\boldsymbol{x}_t,\boldsymbol{y}_t\in \Phi _t}\left| -d\left( \boldsymbol{x}_t,\boldsymbol{x}_{t}^{^{\prime}} \right) -\left( -d\left( \boldsymbol{y}_t,\boldsymbol{x}_{t}^{^{\prime}} \right) \right) \right|\\
&=\mathop {\max   \max} \limits_{\boldsymbol{x}_{t}^{^{\prime}}\in \mathcal{A} ,\boldsymbol{x}_t,\boldsymbol{y}_t\in \Phi _t}\left| d\left( \boldsymbol{y}_t,\boldsymbol{x}_{t}^{^{\prime}} \right) -d\left( \boldsymbol{x}_t,\boldsymbol{x}_{t}^{^{\prime}} \right) \right|,
\end{split}
\end{equation}
according to the triangle inequality, we have $
d( \boldsymbol{y}_t,\boldsymbol{x}_{t}^{^{\prime}} ) -d( \boldsymbol{x}_t,\boldsymbol{x}_{t}^{^{\prime}} ) \leqslant d( \boldsymbol{y}_t,\boldsymbol{x}_t ) \leqslant D(\Phi _t ).
$\par
After obtaining $\Delta \chi _t$ for each location on the trajectory, we can determine the corresponding $\Phi _t$ for each possible location within $\Delta \chi _t$ using (\ref{PLScondition}). Given the current location $\boldsymbol{x}_t$ and the PLS $\Phi_t$, the probability of the output perturbed location $\boldsymbol{x}_{t}^{'}$ is proportional to $
\exp \left( \frac{\epsilon (u(D,r)-\max(u(D,r)))}{2\Delta u} \right)$ according to the PF mechanism, where
$u( D,r ) =-d( \boldsymbol{x}_t,\boldsymbol{x}_{t}^{'} )$. In addition, our PLS contains actual location, in order to protect the user's privacy while balancing QoS, we prefer to output locations within the PLS that are close to actual locations other than itself. Therefore, $
\max ( u( D,r ) ) =\underset{\boldsymbol{x}_t,\boldsymbol{x}_{t}^{\prime}\in \Phi _t ,\boldsymbol{x}_t\ne \boldsymbol{x}_{t}^{\prime}}{\max}( -d( \boldsymbol{x}_t,\boldsymbol{x}_{t}^{\prime} ) ) = -d_{sm}( \boldsymbol{x}_t,\boldsymbol{x}_{t}^{^{\prime}} ),
$ where $d_{sm}( \boldsymbol{x}_t,\boldsymbol{x}_{t}^{^{\prime}})$ is the second smallest distance between actual and perturbed locations.

Thus, we have the perturbed locations’ probability distribution
\begin{footnotesize}
\begin{equation}
\begin{split}
f\left( \boldsymbol{x}_{t}^{^{\prime}}|\boldsymbol{x}_t \right) =\omega\exp \left( \frac{\epsilon \left( -d\left( \boldsymbol{x}_t,\boldsymbol{x}_{t}^{^{\prime}} \right) -\max \left( -d\left( \boldsymbol{x}_t,\boldsymbol{x}_{t}^{^{\prime}} \right) \right) \right)}{2D\left( \Phi _t \right)} \right)
\\
=\omega\exp \left( \frac{-\epsilon \left( d\left( \boldsymbol{x}_t,\boldsymbol{x}_{t}^{^{\prime}} \right) -d_{sm}\left( \boldsymbol{x}_t,\boldsymbol{x}_{t}^{^{\prime}} \right) \right)}{2D\left( \Phi _t \right)} \right) ,
\end{split}
\end{equation}
\end{footnotesize}where $\omega$ is the probability distribution normalization factor, i.e.,
\vspace{-5pt}
\begin{footnotesize}
\begin{equation}
\begin{split}
\omega&=\left( \sum_{\boldsymbol{x}_{t}^{\prime}\in \mathcal{A}}{\exp}\left( \frac{\epsilon \left( -d\left( \boldsymbol{x}_t,\boldsymbol{x}_{t}^{\prime} \right) -\max \left( -d\left( \boldsymbol{x}_t,\boldsymbol{x}_{t}^{\prime} \right) \right) \right)}{2D\left( \Phi _t \right)} \right) \right) ^{-1}
\\
&=\left( \sum_{\boldsymbol{x}_{t}^{\prime}\in \mathcal{A}}{\exp}\left( \frac{-\epsilon \left( d\left( \boldsymbol{x}_t,\boldsymbol{x}_{t}^{\prime} \right) -d_{sm}\left( \boldsymbol{x}_t,\boldsymbol{x}_{t}^{\prime} \right) \right)}{2D\left( \Phi _t \right)} \right) \right) ^{-1}.
\end{split}
\end{equation}
\end{footnotesize}The normalization factor distributes the values of the probabilities all within 0-1.

Algorithm \ref{Algorithm2} summarizes the overall process of the proposed PTPPM.

\subsection{Security Analysis}
In traditional privacy-preserving models, trajectory data is typically perturbed and anonymized on the server side, assuming full trust in the server. In contrast, the PTPPM performs perturbation locally on the user side before data upload, reducing reliance on server trust and ensuring that raw data remains on the user's device, thus mitigating risks of eavesdropping and malicious servers during transmission. The proposed PTPPM mechanism adopts a differential privacy model, where each released location independently satisfies $2\epsilon$-DP. Even with full auxiliary knowledge, an adversary cannot reliably infer the user's true location. Leveraging the sequential composition property, the mechanism maintains an overall $2\epsilon$-DP guarantee over time, effectively preventing privacy leakage from long-term data accumulation.

\begin{theorem}
The Permute-and-Flip mechanism satisfies $2\epsilon$-differential privacy on the PLS $\Phi$.
\end{theorem}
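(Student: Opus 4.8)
The plan is to verify the differential-privacy inequality directly from the closed-form perturbation law $f(\boldsymbol{x}_t'\mid\boldsymbol{x}_t)$ derived just above. In the $\delta$-location-set framework the PLS $\Phi_t$ is exactly the set of mutually indistinguishable candidate secrets, so the statement to establish is that for every pair of candidate true locations $\boldsymbol{x}_t,\boldsymbol{y}_t\in\Phi_t$ and every released location $\boldsymbol{x}_t'$ the likelihood ratio obeys $f(\boldsymbol{x}_t'\mid\boldsymbol{x}_t)\le e^{2\epsilon}\,f(\boldsymbol{x}_t'\mid\boldsymbol{y}_t)$. The first step is to substitute the two instances of $f$ and factor the resulting ratio into a \emph{score ratio} --- the ratio of the unnormalized weights $\exp\!\big(\tfrac{-\epsilon(d(\boldsymbol{x}_t,\boldsymbol{x}_t')-d_{sm}(\boldsymbol{x}_t))}{2D(\Phi_t)}\big)$ with true location $\boldsymbol{x}_t$ versus with true location $\boldsymbol{y}_t$ --- times a \emph{normalization ratio}, the sum of those weights over $\mathcal{A}$ at $\boldsymbol{y}_t$ divided by the same sum at $\boldsymbol{x}_t$.

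The single estimate driving everything is the triangle inequality: for any $\boldsymbol{z}$ one has $|d(\boldsymbol{x}_t,\boldsymbol{z})-d(\boldsymbol{y}_t,\boldsymbol{z})|\le d(\boldsymbol{x}_t,\boldsymbol{y}_t)\le D(\Phi_t)$, which is exactly the sensitivity bound $\Delta u\le D(\Phi_t)$ already recorded after (\ref{utility}), and likewise $d_{sm}(\boldsymbol{x}_t)\le d(\boldsymbol{x}_t,\boldsymbol{y}_t)\le D(\Phi_t)$ for the offset terms. Applying it to the exponent of the score ratio bounds that factor by $e^{\epsilon}$, and applying it to the normalization ratio --- comparing the two sums over $\mathcal{A}$ term by term and comparing the two $d_{sm}$ offsets that factor out of those sums --- bounds that factor by $e^{\epsilon}$ as well; multiplying the two gives the claimed $e^{2\epsilon}$, i.e. $2\epsilon$-DP on $\Phi$. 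It is worth remarking that the $d_{sm}$ offset carried by the score ratio and the one carried by the normalization ratio are reciprocals and hence cancel, so a more careful accounting tightens the constant to $e^{\epsilon}$; the stated $e^{2\epsilon}$ follows a fortiori. A quicker route is to short-circuit the whole computation by invoking the $\epsilon$-DP guarantee of the Permute-and-Flip mechanism of \cite{mckenna2020permute}, instantiated with utility $u(\cdot)=-d(\boldsymbol{x}_t,\cdot)$ and sensitivity $\Delta u\le D(\Phi_t)$.

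The main obstacle is not the algebra but setting up the privacy model correctly: one must argue that the relevant neighbouring relation is ``replace the true location by any other member of $\Phi_t$,'' so that the effective sensitivity of the utility function is the diameter $D(\Phi_t)$ and nothing larger, and then verify that normalizing over all of $\mathcal{A}\supseteq\Phi_t$ rather than over $\Phi_t$ alone does no harm --- which it does not, since $|d(\boldsymbol{x}_t,\boldsymbol{z})-d(\boldsymbol{y}_t,\boldsymbol{z})|\le d(\boldsymbol{x}_t,\boldsymbol{y}_t)$ holds for every $\boldsymbol{z}\in\mathcal{A}$ regardless of where $\boldsymbol{z}$ lies. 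A secondary bookkeeping point is to carry the $\max(u)=-d_{sm}(\boldsymbol{x}_t)$ offset symmetrically through both $f(\boldsymbol{x}_t'\mid\boldsymbol{x}_t)$ and $f(\boldsymbol{x}_t'\mid\boldsymbol{y}_t)$, so that it is absorbed into the normalizers instead of leaking into the final constant.
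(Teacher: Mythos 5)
Your proposal is correct and follows essentially the same route as the paper's proof: the same factorization of $f(\boldsymbol{x}_t'\mid\boldsymbol{x}_t)/f(\boldsymbol{x}_t'\mid\boldsymbol{y}_t)$ into a score ratio times a normalization ratio, the same triangle-inequality sensitivity bound $\Delta u\le D(\Phi_t)$ applied term by term, and the same $e^{\epsilon}\cdot e^{\epsilon}=e^{2\epsilon}$ accounting. Your side remark that the $d_{sm}$ offsets cancel (so the constant actually tightens to $e^{\epsilon}$) is a valid sharpening the paper does not exploit, since $d_{sm}$ does not depend on the released location; the stated $2\epsilon$ bound follows a fortiori either way.
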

\begin{proof}
Starting from (\ref{utility}), we remove the absolute value and simplify the expression as follows:
\begin{equation}
d\left( \boldsymbol{x}_t,\boldsymbol{x}_{t}^{'} \right) -D\left( \Phi _t \right) \le d\left( \boldsymbol{y}_t,\boldsymbol{x}_{t}^{'} \right) \le d\left( \boldsymbol{x}_t,\boldsymbol{x}_{t}^{'} \right) +D\left( \Phi _t \right).
\end{equation}

Based on the inequality obtained above, we proceed to analyze the ratio of the conditional probabilities as follows:
\small
\begin{flalign}\label{22}
&\frac{f(x_t'|x_t)}{f(x_t'|y_t)}
= \frac{
w_{x_t} \cdot \exp\left( -\frac{\epsilon}{2D(\Phi_t)} \left[ d(x_t, x_t') - d_{sm}(x_t, x_t') \right] \right)
}{
w_{y_t} \cdot \exp\left( -\frac{\epsilon}{2D(\Phi_t)} \left[ d(y_t, x_t') - d_{sm}(y_t, x_t') \right] \right)
} \nonumber \\
&\le \frac{w_{x_t}}{w_{y_t}}
\exp\left( \frac{\epsilon \left( \left| d(x_t, x_t') - d(y_t, x_t') \right| + \left| d(x_t, x_t') - d(y_t, x_t') \right| \right)}{2D(\Phi_t)} \right) \nonumber \\
&\le \frac{w_{x_t}}{w_{y_t}} 
\exp\left( \frac{2\epsilon d(x_t, y_t)}{2D(\Phi_t)} \right)
\le \frac{w_{x_t}}{w_{y_t}}
\exp\left( \frac{\epsilon D(\Phi_t)}{D(\Phi_t)} \right) \nonumber \\
&\leq e^{\epsilon}\cdot \frac{\sum_{x_{t}^{'}\in \Delta \chi _t}{\exp}\left( -\frac{\epsilon \left[ d\left( x_t,x_{t}^{'} \right) -D\left( \Phi _t \right) -d_{sm}\left( x_t,x_{t}^{'} \right) -D\left( \Phi _t \right) \right]}{2D\left( \Phi _t \right)} \right)}{\sum_{x_{t}^{'}\in \Delta \chi _t}{\exp}\left( -\frac{\epsilon}{2D\left( \Phi _t \right)}\left[ d\left( x_t,x_{t}^{'} \right) -d_{sm}\left( x_t,x_{t}^{'} \right) \right] \right)} \nonumber \\
&\le e^{\epsilon} \cdot e^{\epsilon} \cdot
\frac{
\sum\limits_{x_t' \in \Delta \chi_t}
\exp\left( -\frac{\epsilon}{2D(\Phi_t)} \left[ d(x_t, x_t') - d_{sm}(x_t, x_t') \right] \right)
}{
\sum\limits_{x_t' \in \Delta \chi_t}
\exp\left( -\frac{\epsilon}{2D(\Phi_t)} \left[ d(x_t, x_t') - d_{sm}(x_t, x_t') \right] \right)
} \nonumber \\
&= e^{2\epsilon}
\end{flalign}
\end{proof}
Therefore, the PF mechanism satisfies $2\epsilon$-differential privacy.
\begin{remark}
The above proof shows that the perturbation mechanism satisfies \(2\epsilon\)-differential privacy. In (\ref{22}), \(d_{sm}(x_t, x'_t)\) denotes the second smallest distance between the actual and perturbed locations, reflecting a preference for selecting nearby alternatives. This ensures strong privacy protection while preserving QoS. 
\end{remark}

\begin{theorem}
If each sensitive point and its adjacent points in the trajectory are perturbed using the PF mechanism individually, and each mechanism satisfies $2\epsilon_{x_i}$-differential privacy, then the entire trajectory satisfies $2\epsilon_s$-differential privacy.
\end{theorem}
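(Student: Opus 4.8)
The plan is to invoke the sequential composition theorem for differential privacy. First I would observe that Theorem~1 establishes that each application of the PF mechanism to a single location $\boldsymbol{x}_i$ satisfies $2\epsilon_{\boldsymbol{x}_i}$-differential privacy, where $\epsilon_{\boldsymbol{x}_i}$ is the privacy budget allocated to that location by the PPBA algorithm (for adjacent nodes, the relevant budget is $\epsilon_{\boldsymbol{x}_j}'$, or $\min(\epsilon_{\boldsymbol{x}_j},\epsilon_{\boldsymbol{x}_j}')$ when a node is itself sensitive). Since the perturbation of each location is carried out independently and locally on the user side, the release of the entire perturbed trajectory is the composition of these independent mechanisms.

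Next I would state the standard sequential composition lemma: if mechanisms $\mathcal{K}_1,\dots,\mathcal{K}_n$ satisfy $\eta_1,\dots,\eta_n$-differential privacy respectively, then their joint release satisfies $\left(\sum_{i=1}^{n}\eta_i\right)$-differential privacy. Applying this with $\eta_i = 2\epsilon_{\boldsymbol{x}_i}$ gives that the full trajectory perturbation satisfies $\left(\sum_{i=1}^{n} 2\epsilon_{\boldsymbol{x}_i}\right)$-differential privacy, i.e. $2\left(\sum_{i=1}^{n}\epsilon_{\boldsymbol{x}_i}\right)$-differential privacy.

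Finally I would close the argument by showing $\sum_{i=1}^{n}\epsilon_{\boldsymbol{x}_i} = \epsilon_s$. This follows directly from the allocation formula~(\ref{PrivacyBudgetSi}): summing $\epsilon_{\boldsymbol{x}_i} = \frac{1/S_i}{\sum_{\boldsymbol{x}_i\in\mathcal{A}} 1/S_i}\cdot\epsilon_s$ over all sensitive locations telescopes the normalized weights to $1$, leaving $\epsilon_s$. A minor subtlety is that the budget redistributed to adjacent non-sensitive nodes via~(\ref{PrivacyBudgetNode}) must be accounted for consistently — one should argue that the bookkeeping in Algorithm~\ref{Algorithm1} (taking $\min(\epsilon_{\boldsymbol{x}_j},\epsilon_{\boldsymbol{x}_j}')$ for overlap) keeps the total privacy expenditure on the trajectory bounded by $\epsilon_s$, so the composed guarantee is at most $2\epsilon_s$-DP.

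The main obstacle I anticipate is precisely this accounting step: the PPBA algorithm allocates budget both to sensitive locations and, separately, to their neighbors, and a single trajectory point may receive budget from multiple sources or play multiple roles, so one must be careful that the sum over all \emph{released} locations on a given trajectory does not exceed $\epsilon_s$. Once the total-budget invariant is pinned down, the rest is a routine application of composition. I would therefore present the composition step briefly and devote the bulk of the proof to verifying that $\sum \epsilon_{\boldsymbol{x}_i} = \epsilon_s$ under the allocation scheme.
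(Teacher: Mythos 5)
Your proposal follows essentially the same route as the paper's proof: per-location $2\epsilon_{\boldsymbol{x}_i}$-DP from the PF mechanism, the sequential composition theorem to accumulate the losses, and the observation that the PPBA allocation normalizes the sensitive-location budgets to sum to $\epsilon_s$ so the composed guarantee is $2\epsilon_s$-DP. The accounting subtlety you flag for adjacent nodes is exactly the step the paper handles (somewhat tersely) by asserting $\sum_{\boldsymbol{x}_j\in B_i}\epsilon_{\boldsymbol{x}_j}'\le\epsilon_s$, so your instinct to devote care to that invariant is well placed.
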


\begin{proof}
Each location point $x_i$ is independently perturbed using the PF mechanism, ensuring that it satisfies $2\epsilon_{x_i}$-differential privacy.

Since perturbations are independent, the total privacy loss of the trajectory accumulates according to the sequential composition theorem \cite{dwork2006differential}:
\begin{equation}
\sum_{x\in Tr}{2}\epsilon _{\boldsymbol{x}_j}^{'}=2\sum_{x\in Tr}{\epsilon _{x_i}}.
\end{equation}
According to (\ref{PrivacyBudgetSi}), the total privacy budget over all sensitive locations equals $\epsilon_s$. Since each adjacent location $x_j \in B_i$ shares the budget of its associated sensitive location 
 $x_i$, it holds that
\begin{equation}
\sum_{x \in B_i} \epsilon _{\boldsymbol{x}_j}^{'} \leq \epsilon_s.
\end{equation}
Thus, the total privacy loss is bounded by:
\begin{equation}
2\sum_{x\in Tr}\epsilon_x\leq2\epsilon_s.
\end{equation}
Hence, the PTPPM mechanism satisfies $2\epsilon_s$-differential privacy.

\end{proof}

\begin{remark}
At each time step, the released location independently satisfies \(2\epsilon_s\)-differential privacy. By the sequential composition property, users consistently receive location services with a privacy guarantee of \(2\epsilon_s\).
\end{remark}

\setlength{\textfloatsep}{8pt plus 1.0pt minus 2.0pt}
\begin{figure}[!t]
\begin{center}
\includegraphics[width=0.42\textwidth]{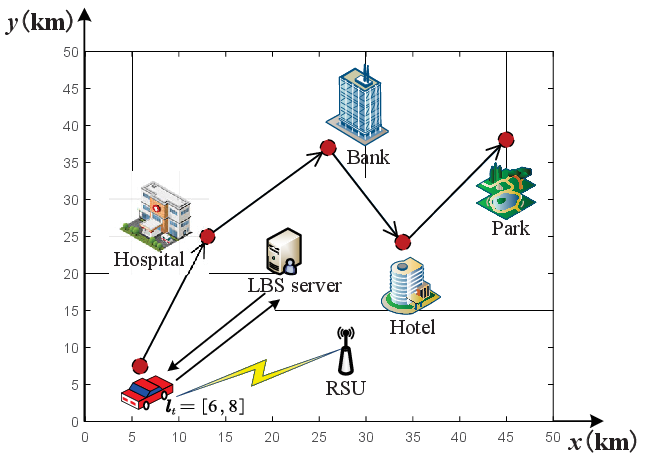}
\end{center}
\captionsetup{skip=-2pt}
\caption{Simulation setting of the trajectory of a user.}
\label{fig5}
\end{figure}

\setlength{\textfloatsep}{10pt plus 1.0pt minus 2.0pt}
\begin{figure*}[!t]
\centering
\subfigure[Trajectory privacy v.s. $\epsilon_s$ (T-Drive)]{\includegraphics[width=0.23\textwidth]{{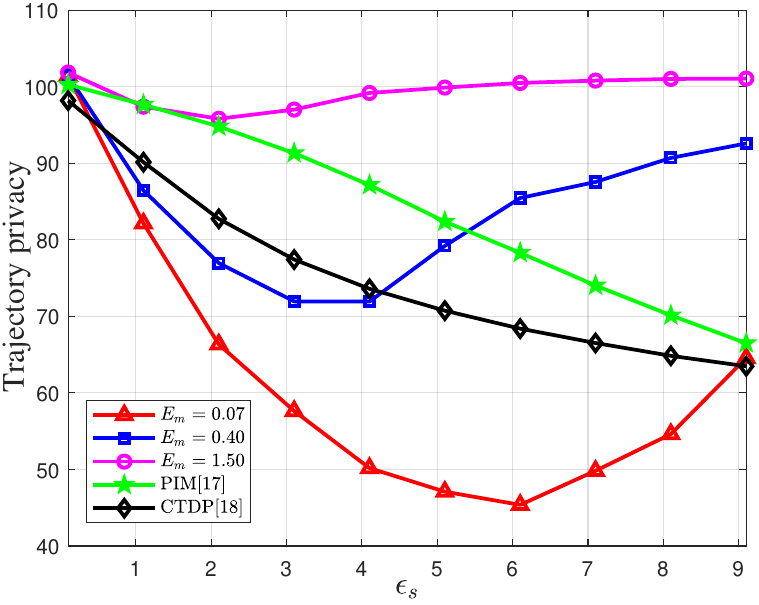}}%
\label{fig9a}}
\hfil
\subfigure[QoS loss v.s. $\epsilon_s$ (T-Drive)]{\includegraphics[width=0.23\textwidth]{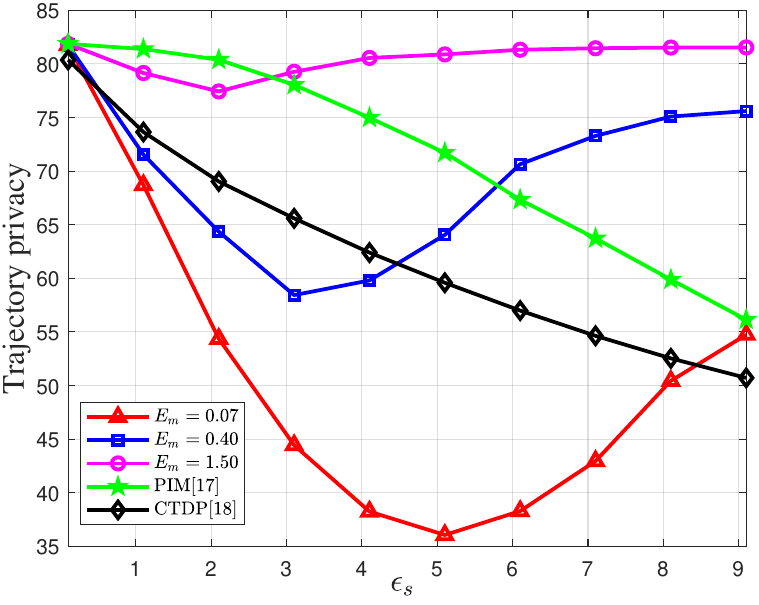}%
\label{fig9c}}
\hfil
\subfigure[Trajectory privacy v.s. $\epsilon_s$ (Geolife)]{\includegraphics[width=0.23\textwidth]{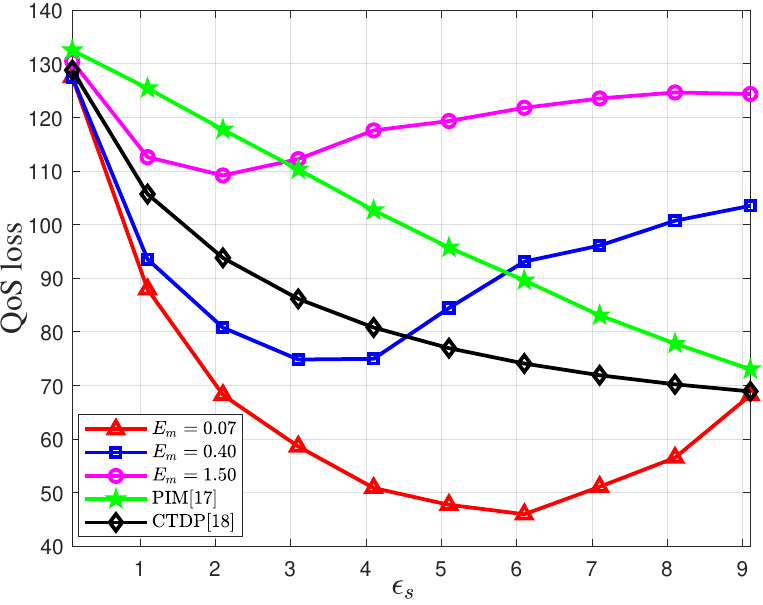}%
\label{fig9b}}
\hfil
\subfigure[QoS loss v.s. $\epsilon_s$ (Geolife)]{\includegraphics[width=0.23\textwidth]{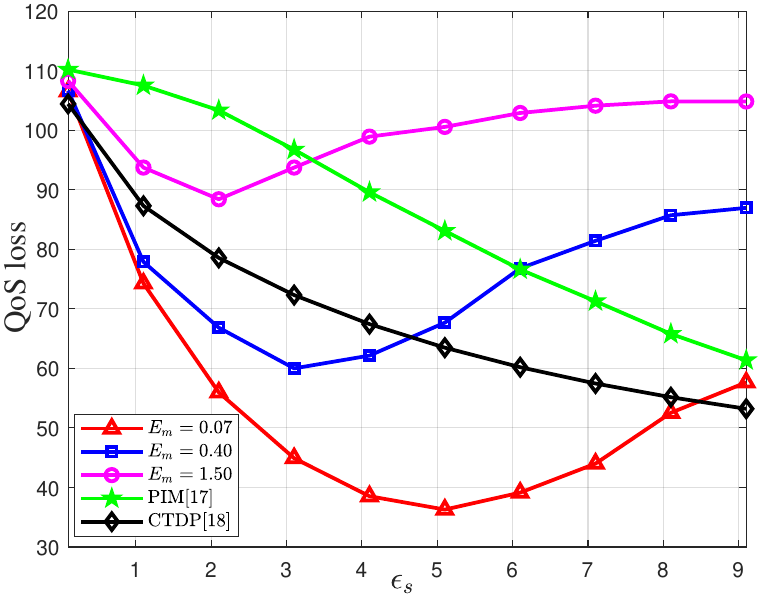}%
\label{fig9d}}
\hfil
\caption{Performance comparison of different TPPMs under varying $\epsilon _s$ on T-Drive and Geolife datasets.}
\label{fig9}
\end{figure*}

\section{Simulation Results}\label{Simulation Results}
\subsection{Simulation Setup}

In this section, we evaluate the impact of privacy parameters on our proposed mechanism, PTPPM, and conduct a performance analysis against different attacks. We compare the efficacy of PTPPM with other trajectory privacy protection mechanisms (TPPMs), including PIVE \cite{yu2017dynamic}, PIM \cite{xiao2015protecting}, and CTDP \cite{10666272} regarding location privacy and QoS loss.

To verify the effectiveness of the proposed mechanism and enhance its generalizability and adaptability across different application scenarios, this study employs both the T-Driver dataset \cite{yuan2010t}, \cite{yuan2011driving}, and the Geolife dataset \cite{zheng2009mining} for experiments. The T-Driver dataset contains GPS trajectory data collected from 10,357 taxis in Beijing from February 2 to February 8, 2008. It includes detailed information such as taxi ID, timestamp, and geographical coordinates (longitude and latitude). Each file in the dataset is named after the corresponding taxi ID and contains the complete trajectory data for that vehicle. The Geolife dataset, collected by Microsoft Research Asia, comprises a large number of real user trajectories, covering walking, biking, and a small portion of driving traces, thus providing more diverse mobility patterns and trajectory characteristics.

A file is randomly chosen for simulation purposes, and a subset of locations within the file is designated as sensitive locations. The stay duration, access frequency, and other relevant user information about these locations are then analyzed. A location transition probability matrix is constructed using all of the trajectory data from this selected file. The map is divided into cells measuring 0.62 × 0.62 km$^2$, and a time interval of 177 seconds is set. Fig. \ref{fig5} depicts the trajectory of this user at five sequential moments.

The evaluation of location privacy $p$ and QoS loss $q$ is conducted using metrics similar to those outlined in our previous work \cite{min20213d}, as follows:
\begin{equation}\label{Privacy}
p=\sum_{\boldsymbol{x}_t,\boldsymbol{x}_{t}^{'},\boldsymbol{\hat{x}}_t\in \mathcal{A}}{\text{Pr}\left( \boldsymbol{x}_t \right)}f\left( \boldsymbol{x}_{t}^{'}|\boldsymbol{x}_t \right) h\left( \boldsymbol{\hat{x}}_t|\boldsymbol{x}_{t}^{'} \right) d\left( \boldsymbol{x}_t,\boldsymbol{\hat{x}}_t \right),
\end{equation}
\begin{equation}\label{QoS loss}
q=\sum_{\boldsymbol{x}_t,\boldsymbol{x}_{t}^{'}\in \mathcal{A}}{\text{Pr}\left( \boldsymbol{x}_t \right)}f\left( \boldsymbol{x}_{t}^{'}|\boldsymbol{x}_t \right) d\left( \boldsymbol{x}_t,\boldsymbol{x}_{t}^{'} \right).
\end{equation}

\setlength{\textfloatsep}{10pt plus 1.0pt minus 2.0pt}
\begin{figure*}[!t]
\centering
\subfigure[Trajectory privacy v.s. $\epsilon_s$]{
\includegraphics[width=0.23\textwidth]{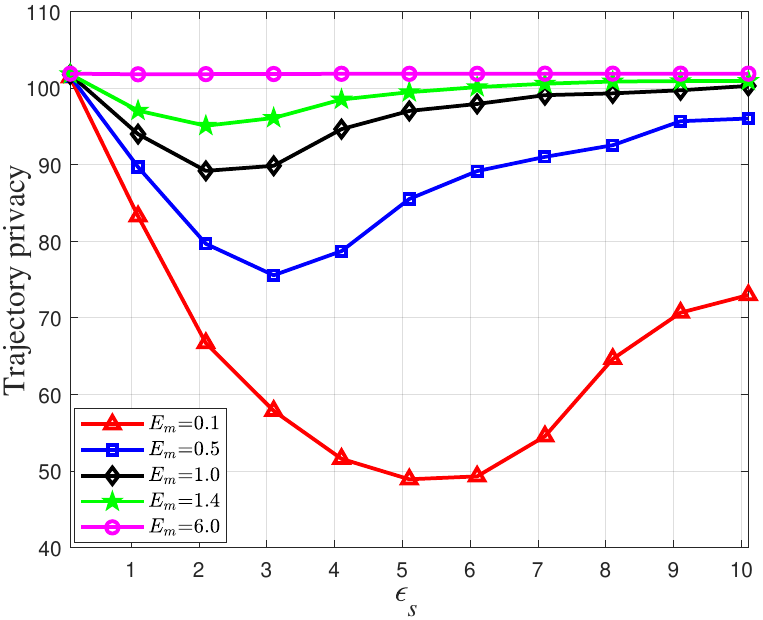} \label{fig7a}
}
\hfill
\subfigure[Trajectory privacy v.s. $E_m$]{
\includegraphics[width=0.23\textwidth]{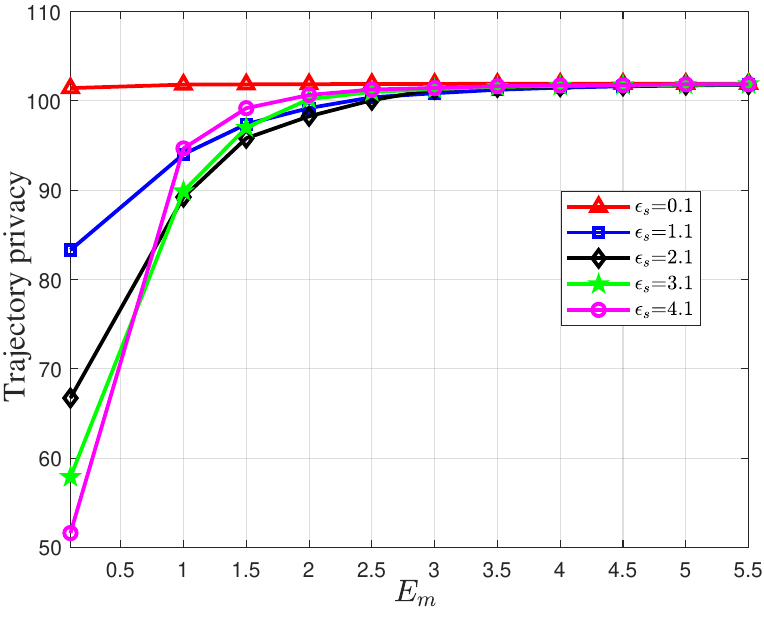} \label{fig7b}
}
\hfill
\subfigure[QoS loss v.s. $\epsilon_s$]{
\includegraphics[width=0.23\textwidth]{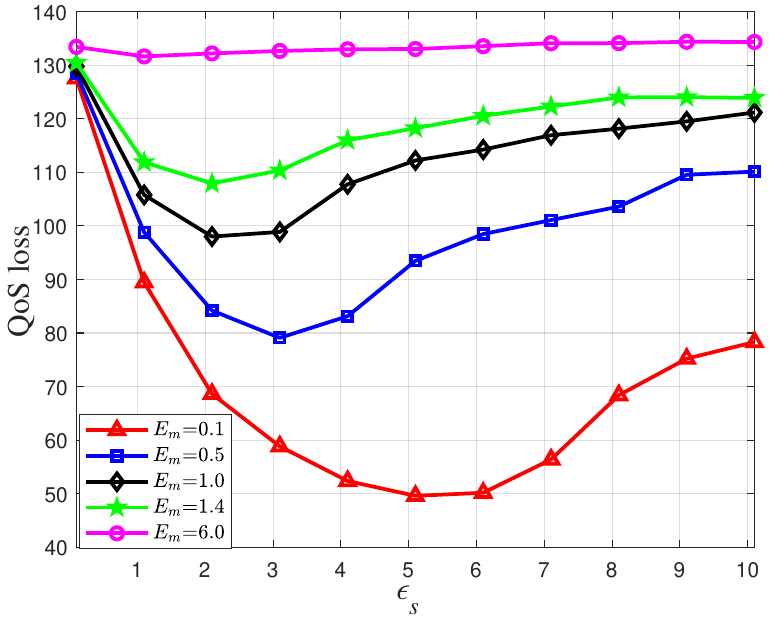}\label{fig7c}
}
\hfill
\subfigure[QoS loss v.s. $E_m$]{
\includegraphics[width=0.23\textwidth]{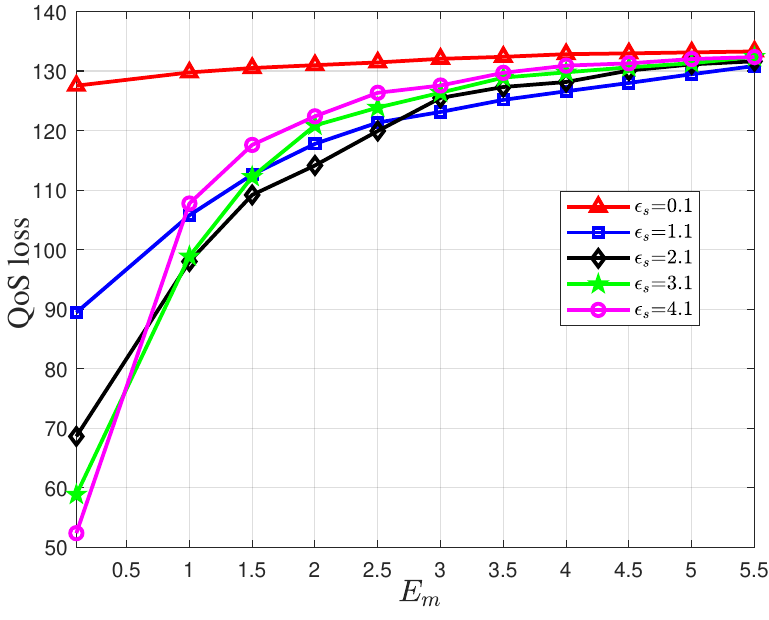}\label{fig7d}
}
\hfill
\captionsetup[subfigure]{skip=2pt}
\caption{Impact of $\epsilon$ and $E_m$ on personalized trajectory privacy protection.}
\label{fig7}
\end{figure*}

\subsection{Performance Comparison under Varying $\epsilon_s$}

As shown in Fig. \ref{fig9}, we present a performance comparison of PTPPM, PIM, and CTDP under different values of $\epsilon_s$ on the T-Drive and GeoLife datasets. Unlike PIM, which lacks personalization features, our proposed PTPPM achieves dynamic adjustment between privacy and QoS loss by tuning the parameter $E_m$. PTPPM comprehensively considers key factors influencing location sensitivity as well as road network constraints, allowing it to assign more accurate sensitivity levels and privacy budgets for each location. Therefore, compared to CTDP, it demonstrates superior performance in both privacy protection and QoS.
More specifically, as shown in Fig. \ref{fig9}, when $\epsilon _s$ is small, the privacy and QoS loss are large, and at the same time $D\left( \Phi _t \right)$ also affects the privacy and QoS loss, the larger $E_m$, the larger $D\left( \Phi _t \right)$, the larger the privacy and QoS loss.  

Therefore, when the privacy budget $\epsilon_s$ is relatively small (e.g., $\epsilon_s \in (0.1, 3.1)$), a larger value of $E_m$ (e.g., $E_m = 1.5$) is preferred. As shown in Figs. \ref{fig9a} and \ref{fig9c}, under these settings, PTPPM achieves a similar level of privacy as PIM but with significantly lower QoS loss. As illustrated in Figs. \ref{fig9b} and \ref{fig9d}, although PTPPM provides slightly lower privacy compared to PIM, it results in a much smaller QoS loss.
$E_m$ is the lower bound of the attacker's expected inference error; when $E_m$ is large, the privacy and QoS loss of PTPPM are large, but as $\epsilon _s$ increases, the privacy and QoS loss of PIM gradually decrease. So, in the process of increasing $\epsilon _s$, it is necessary to choose an appropriate $\epsilon _s$ to ensure the lower bound of the inference error.

Similarly, compared to CTDP, PTPPM allows users to personalize the parameter $E_m$ based on their individual needs to achieve the desired level of privacy or QoS.
When the user's location is highly sensitive, a larger value of $E_m$ can be selected to enhance privacy protection; conversely, when high QoS is required, a smaller $E_m$ is more appropriate.

\subsection{Impact of Privacy Parameters on Personalized Trajectory Privacy Protection}

In Fig. \ref{fig7}, we explore the impact of varying privacy budgets $\epsilon$ and inference error thresholds $E_m$ on the user's personalized trajectory privacy protection performance. Both parameters exhibit a significant impact on trajectory privacy and QoS loss. As the total privacy budget $\epsilon _s$ for sensitive locations increases, the privacy budget $\epsilon$ allocated to each location increases, as shown in Figs. \ref{fig7a} and \ref{fig7c}. According to $E\left( \Phi _t \right) \ge e^{\epsilon}E_m$, the increase of $\epsilon$ leads to a rapid expansion in the protection area diameter $D\left( \Phi _t \right)$, so when $\epsilon$ is larger than a certain value, the trajectory privacy and QoS loss begin to increase. The inflection points vary based on the $E_m$ settings, with higher $E_m$ values corresponding to greater lower bounds for the attacker's expected inference error, thereby enhancing trajectory privacy protection. However, practical limitations prevent $D\left( \Phi _t \right)$ from perpetually increasing, ultimately capping trajectory privacy and QoS loss at an upper limit.

Figs. \ref{fig7b} and \ref{fig7d} show that the trajectory privacy and QoS loss increase with the increase of $E_m$ with various $\epsilon _s$. For different $\epsilon _s$, the trajectory privacy and QoS loss increase as $E_m$ increases. Given $\epsilon _s$, the privacy budget $\epsilon$ for each location is determined accordingly. When $E_m$ increases, the protected region's $D\left( \Phi _t \right)$ increases, increasing the trajectory privacy and QoS loss. The effect of $\epsilon$ on $D\left( \Phi _t \right)$ is exponential and much higher than that of $E_m$. So when $\epsilon _s$ is set to 4.1, with the increase of $E_m$, $D\left( \Phi _t \right)$ changes significantly, the trajectory privacy and QoS loss increase sharply, and the curve is steep. Since $D\left( \Phi _t \right)$ of the protection region is bounded, the privacy and QoS loss will converge to a finite value. A minimal privacy budget cannot ensure the QoS. It can be observed that regardless of the settings of $E_m$, the user's trajectory privacy and QoS loss reach the upper bound when $\epsilon _s$ is 0.1. By adjusting the privacy parameters, the personalized protection of user trajectory privacy can be realized, which proves the feasibility and effectiveness of PTPPM.
\setlength{\textfloatsep}{10pt plus 1.0pt minus 2.0pt}
 \begin{figure}[!t]
\centering
\subfigure[Size vs. $\delta$]{\includegraphics[width=0.23\textwidth]{{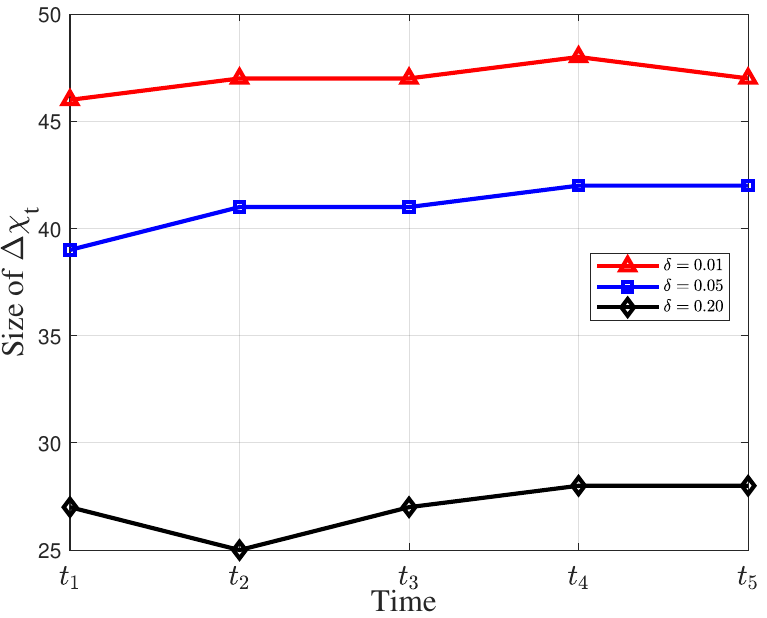}}%
\label{fig6a}}
\hfil
\subfigure[Trajectory Privacy vs. $\delta$]{\includegraphics[width=0.23\textwidth]{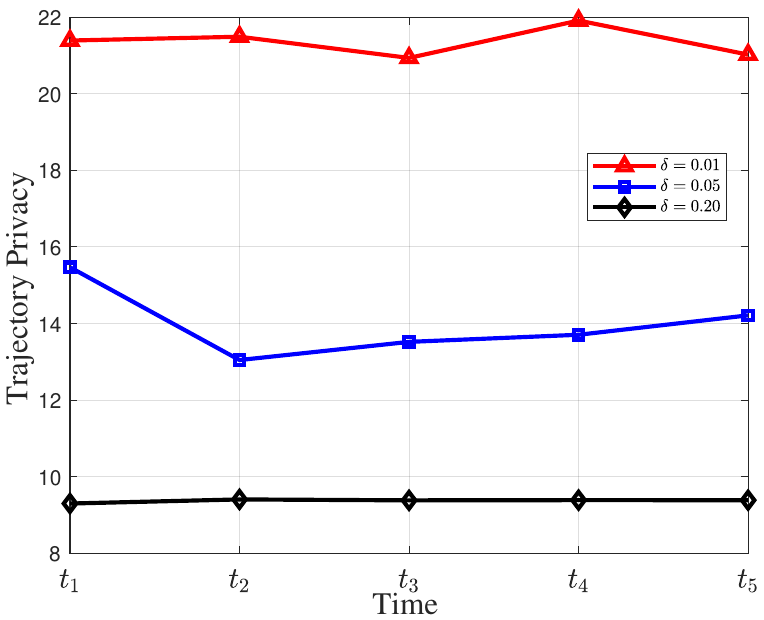}%
\label{fig6b}}
\hfill
\captionsetup[subfigure]{skip=2pt}
\caption{The impact of $\delta$ on $\Delta X_t$ and trajectory privacy.}
\label{fig6}
\end{figure}
Fig. \ref{fig6} illustrates the impact of the parameter $\delta$ on $\Delta X_t$ over consecutive time steps in the trajectory, as well as its effect on privacy. As shown in Fig. \ref{fig6a}, the size of $\Delta X_t$ is primarily influenced by the value of $\delta$; as $\delta$ increases, $\Delta X_t$ becomes smaller because more improbable locations are excluded (truncated). However, an excessively large $\delta$ increases the risk of privacy leakage. As depicted in Fig. \ref{fig6b}, a larger $\delta$ results in lower privacy at each time step. This is because a larger $\delta$ reduces the number of locations within $\Delta X_t$, making it impossible to construct a PLS that satisfies (\ref{PLScondition}), thereby weakening privacy protection. For example, at time step $t_3$, when $\delta = 0.01$, $\Delta X_t$ contains 47 locations and yields a privacy value of 20.94, which is 123\% of the value when $\delta = 0.20$.
\setlength{\textfloatsep}{10pt plus 1.0pt minus 2.0pt}
\begin{figure*}[!t]
\centering
\subfigure[The optimal inference attack at $t_2$]{
\includegraphics[width=0.23\textwidth]{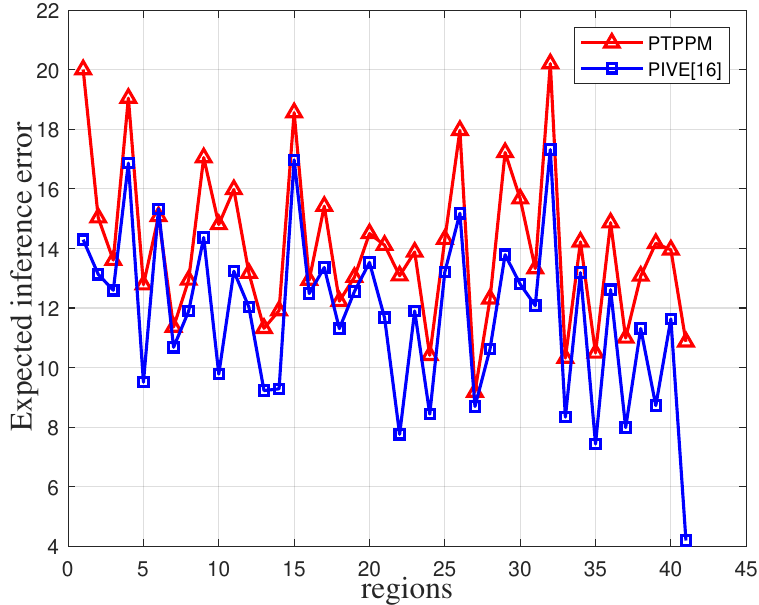} \label{fig8a}
}
\hfill
\subfigure[The optimal inference attack at $t_3$]{
\includegraphics[width=0.23\textwidth]{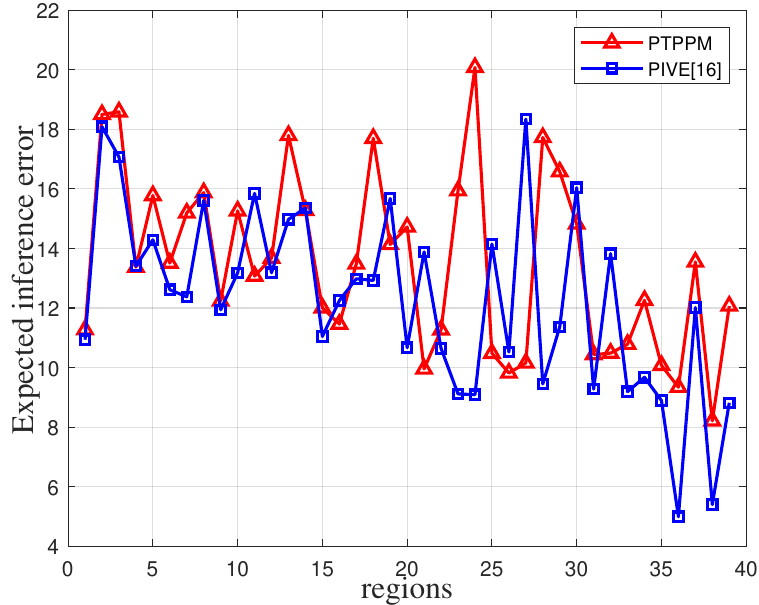} \label{fig8b}
}
\hfill
\subfigure[The Bayesian inference attack at $t_2$]{
\includegraphics[width=0.23\textwidth]{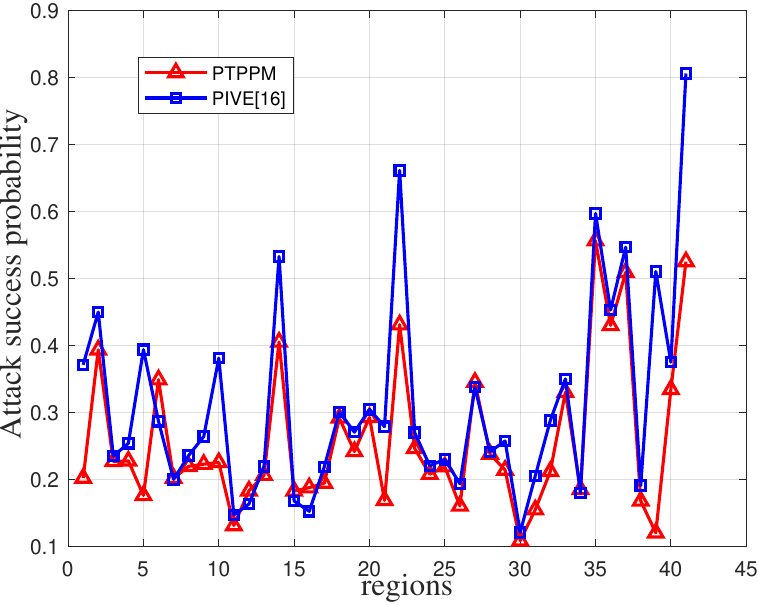}\label{fig8c}
}
\hfill
\subfigure[The Bayesian inference attack at $t_3$]{
\includegraphics[width=0.23\textwidth]{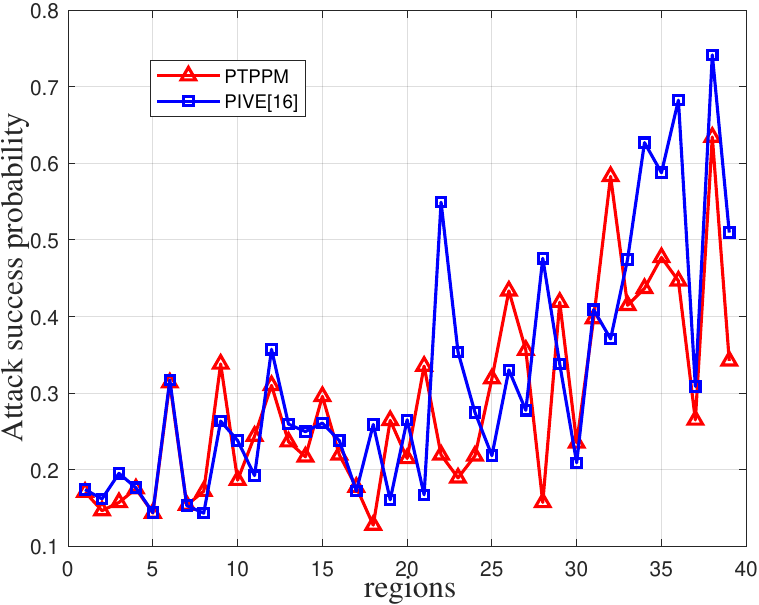}\label{fig8d}
}
\hfill
\captionsetup[subfigure]{skip=2pt}
\caption{Performance against different attack models.}
\label{fig8}
\end{figure*}
\subsection{Performance Against Different Attack Models}
Next, considering spatiotemporal correlation information, we validate the performance advantages of the proposed PTPPM against various types of attackers. We compare PTPPM, which takes the spatiotemporal correlation between different locations on the trajectory into account, with PIVE, which is not considered. First, we find PLS for locations within $\varDelta \chi _{t_2}$ and $\varDelta \chi _{t_3}$ with PIVE and PTPPM, respectively, and afterward, uniformly perturb with the PF mechanism. Comparing the attacker's expected inference error in the optimal inference attack with the attacker's success probability in the Bayesian attack for both schemes.

 As shown in Figs. \ref{fig8a} and \ref{fig8b}, PTPPM makes the attacker's expected inference error larger for 98\% of locations within $\varDelta \chi _{t_2}$ and 72\% of locations within $\varDelta \chi _{t_3}$. This is because PTPPM infers in which locations the user is more likely to appear at the next moment based on the user's current location. That is, the prior probability of these locations increases. According to (\ref{8}), increasing the prior probability will increase $E\left( \Phi _t \right)$. Whereas $E\left( \Phi _t \right) \le D\left( \Phi _t \right)$, PLS will be searched with a larger diameter. So, PLS may contain locations that are farther apart, leading to an increase in the expected inference error of the attacker. As shown in Figs. \ref{fig8c} and \ref{fig8d}, PTPPM makes the attacker's success probability smaller for 83\% of locations within $\varDelta \chi _{t_2}$ and 64\% of locations within $\varDelta \chi _{t_3}$. This is because searching with a larger diameter makes the PLS potentially contain more locations, resulting in a reduced success probability for the attacker.

\subsection{Performance Comparison under Varying QoS Loss}
We conducted a quantitative comparison of the performance of PTPPM, PIVE, PIM, and CTDP in terms of trajectory privacy protection and QoS loss on both the T-Driver and GeoLife datasets, thereby validating the performance advantages and cross-scenario applicability of PTPPM. We take the predefined value in (\ref{Privacy}) as the target QoS loss, where the only variable in the equation is the privacy budget $\epsilon$. By solving this equation, we can obtain the corresponding $\epsilon$ values for PTPPM, PIVE, PIM, and CTDP under the same privacy condition. By substituting (\ref{QoS loss}), the corresponding QoS loss of PTPPM, PIVE, PIM, and CTDP under the same privacy can be calculated.

As shown in Fig. \ref{fig7c}, we analyze how selecting different values of $E_m$ affects the lower bound of QoS loss in PTPPM. Since the lower bounds of QoS loss vary between the T-Drive and GeoLife datasets, Fig. 10(a) and Fig. 10(b) respectively start from QoS = 80 and QoS = 65.
As shown in Fig. \ref{fig10}, PTPPM effectively reduces QoS loss under the same level of privacy protection.
For instance, as illustrated in Fig. \ref{fig10a}, when the privacy protection level is set to 80, PTPPM achieves a QoS loss that is 2.30\% lower than PIVE, 5.56\% lower than CTDP, and 7.61\% lower than PIM. Conversely, in Fig. \ref{fig10b}, when the QoS loss is fixed at 65, PTPPM provides a privacy improvement of 1.10\% over PIVE, 3.50\% over CTDP, and 5.91\% over PIM.
This advantage stems from the fact that, unlike CTDP and PIM, both PTPPM and PIVE incorporate a lower bound constraint on the inference error $E_m$, which enhances system robustness.

Compared to PIVE, PTPPM adopts the PF mechanism to release perturbed locations, reducing the perturbation distance while still satisfying the PLS privacy requirement.
Moreover, since $D$ cannot be increased indefinitely in practical applications, the level of trajectory privacy will eventually reach an upper limit.
These results demonstrate that PTPPM not only provides stronger privacy protection but also better meets users’ QoS requirements.

\setlength{\textfloatsep}{10pt plus 1.0pt minus 2.0pt}
 \begin{figure}[!t]
\centering
\subfigure[Trajectory privacy vs. QoS loss (T-Drive)]{\includegraphics[width=0.23\textwidth]{{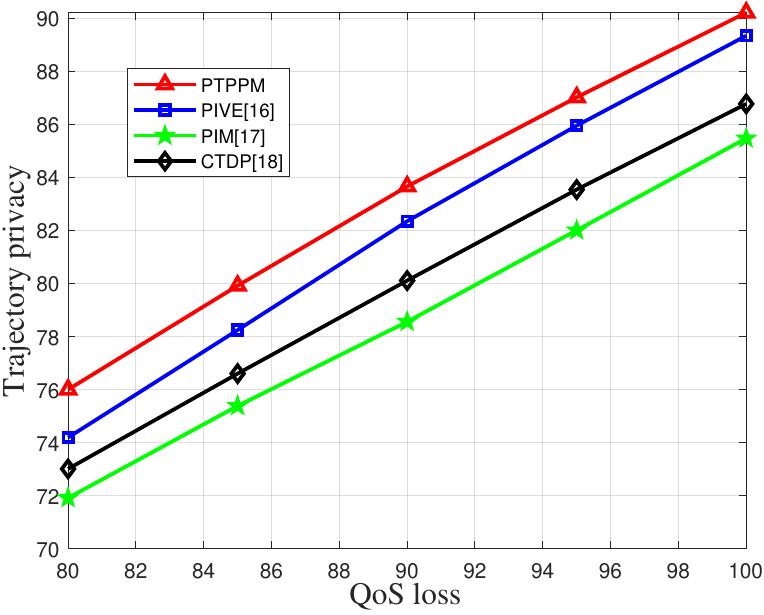}}%
\label{fig10a}}
\hfil
\subfigure[Trajectory privacy vs. QoS loss (GeoLife)]{\includegraphics[width=0.23\textwidth]{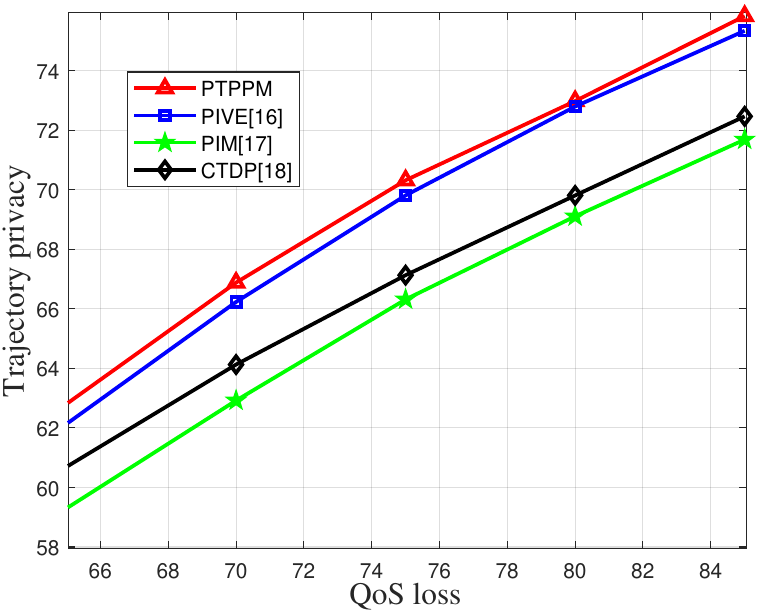}%
\label{fig10b}}
\caption{Performance comparison of different TPPMs under varying QoS loss on T-Drive and Geolife datasets.}
\label{fig10}
\end{figure}
\section{Conclusion}\label{Conclusion}
In this paper, we have proposed a personalized trajectory privacy protection mechanism, PTPPM. This paper has three novel contributions. First, we address the issue of the attacker exploiting the spatiotemporal correlations between different locations to compromise user privacy. To mitigate this threat, we design a robust trajectory privacy protection mechanism. Second, we develop a personalized privacy budget allocation algorithm by integrating a more practical and comprehensive sensitivity assessment approach. We adjust the privacy budget and the expected inference errors bound to achieve personalized privacy protection. By combining the concepts of geo-indistinguishability and distortion privacy, we enhance the system's robustness against location inference attacks while achieving a personalized approach. Finally, we propose a novel perturbation mechanism, Permute-and-Flip, which releases perturbed locations with smaller perturbation distances to better balance the trajectory privacy and QoS. Simulation results demonstrate that PTPPM can effectively achieve personalized trajectory privacy protection, achieving a significant performance advantage over PIVE under different attack models. For instance, in 98\% of locations within $\varDelta \chi _{t_2}$, PTPPM results in a larger attacker's expected inference error compared to PIVE. Moreover, PTPPM reduces the attacker's success probability in 83\% of the locations within $\varDelta \chi _{t_2}$.

\balance
\bibliography{ref}
\bibliographystyle{ieeetr}

\end{document}